\newtheorem{proposition}{{\bf Proposition}}
\newtheorem{lemma}{{\bf Lemma}}
\newtheorem{assumption}{{\bf Assumption}}
\newtheorem{theorem}{{\bf Theorem}}
\newcommand{\E}{\mathrm{E}}
\newcommand{\Cov}{\mathrm{Cov}}
\newcommand{\vect}{\mathrm{vec}}
\newcommand{\tr}{\mathrm{tr}}
\newcommand{\rank}{\mathrm{rank}}
\newcommand{\diag}{\mathrm{diag}}
\newcommand{\Var}{\mathrm{Var}}
\def\amp{\mathop{\;\:}\nolimits}
\newcommand{\Abf}{{\bm A}}
\newcommand{\Bbf}{{\bm B}}
\newcommand{\Cbf}{{\bm C}}
\newcommand{\Dbf}{{\bm D}}
\newcommand{\Ebf}{{\bm E}}
\newcommand{\Gbf}{{\bm G}}
\newcommand{\Hbf}{{\bm H}}
\newcommand{\Ibf}{{\bm I}}
\newcommand{\Lbf}{{\bm L}}
\newcommand{\Mbf}{{\bm M}}
\newcommand{\Nbf}{{\bm N}}
\newcommand{\Pbf}{{\bm P}}
\newcommand{\Rbf}{{\bm R}}
\newcommand{\Sbf}{{\bm S}}
\newcommand{\Ubf}{{\bm U}}
\newcommand{\Vbf}{{\bm V}}
\newcommand{\Xbf}{{\bm X}}
\newcommand{\Ybf}{{\bm Y}}
\newcommand{\Zbf}{{\bm Z}}
\newcommand{\dbf}{{\bm d}}
\newcommand{\rbf}{{\bm r}}
\newcommand{\ubf}{{\bm u}}
\newcommand{\vbf}{{\bm v}}
\newcommand{\xbf}{{\bm x}}
\newcommand{\ybf}{{\bm y}}
\newcommand{\zerobf}{{\mathbf 0}}
\newcommand{\onebf}{{\mathbf 1}}
\newcommand{\greekbold}[1]{\mbox{\boldmath $#1$}}
\newcommand{\alphabf}{\greekbold{\alpha}}
\newcommand{\betabf}{\greekbold{\beta}}
\newcommand{\epsilonbf}{\greekbold{\epsilon}}
\newcommand{\gammabf}{\greekbold{\gamma}}
\newcommand{\mubf}{\greekbold{\mu}}
\newcommand{\sigmabf}{\greekbold{\sigma}}
\newcommand{\thetabf}{\greekbold{\theta}}
\newcommand{\lambdabf}{\greekbold{\lambda}}
\newcommand{\Deltabf}{\greekbold{\Delta}}
\newcommand{\Gammabf}{\greekbold{\Gamma}}
\newcommand{\Phibf}{\greekbold{\Phi}}
\newcommand{\Psibf}{\greekbold{\Psi}}
\newcommand{\Lambdabf}{\greekbold{\Lambda}}
\newcommand{\Omegabf}{\greekbold{\Omega}}
\title{MM Algorithms for Variance Components Models}
\author{Hua Zhou\\
Department of Biostatistics\\
University of California, Los Angeles\\
huazhou@ucla.edu\\
\and
Liuyi Hu\\
Department of Statistics\\
North Carolina State University\\
lhu@ncsu.edu\\
\and
Jin Zhou\\
Division of Epidemiology \\and Biostatistics\\
University of Arizona\\
jzhou@email.arizona.edu\\
\and
Kenneth Lange\\
Departments of Biomathematics, \\Human Genetics, and Statistics\\
University of California, Los Angeles\\
klange@ucla.edu
}
\date{}
\begin{document}
\maketitle

\begin{abstract}
Variance components estimation and mixed model analysis are central themes in statistics with applications in numerous scientific disciplines. Despite the best efforts of generations of statisticians and numerical analysts, maximum likelihood estimation and restricted maximum likelihood estimation of variance component models remain numerically challenging.  Building on the minorization-maximi\-za\-tion (MM) principle, this paper presents a novel iterative algorithm for variance components estimation. MM algorithm is trivial to implement and competitive on large data problems. The algorithm readily extends to more complicated problems such as linear mixed models, multivariate response models possibly with missing data, maximum a posteriori estimation, penalized estimation, and generalized estimating equations (GEE).  We establish the global convergence of the MM algorithm to a KKT point and demonstrate, both numerically and theoretically, that it converges faster than the classical EM algorithm when the number of variance components is greater than two and all covariance matrices are positive definite.

{\bf Keywords:} generalized estimating equations (GEE), global convergence, matrix convexity, linear mixed model (LMM), maximum a posteriori (MAP) estimation, maximum likelihood estimation (MLE), minorization-maximization (MM), missing data, multivariate response, penalized estimation, restricted maximum likelihood (REML), variance components model
\end{abstract}

\section{Introduction}

Variance components and linear mixed models are among the most potent tools in a statistician's toolbox. They are essential topics in graduate-level linear model courses and the subject of many
current papers and research monographs \citep{RaoLKleffe88VarCompBook,SearleCasellaMcCulloch92VarCompBook,Rao97VarCompBook,KhuriMathewSinha98LMMBook,Demidenko13MixedModelsBook}. Their applications in agriculture, biology, economics, genetics, epidemiology, and medicine are too numerous to cover here in detail. The recommended books  \citep{VerbekeMolenberghs00LMMBook, Weiss05Book,FitzmauriceLairdWare11Book}
stress longitudinal data analysis.

Given an observed $n \times 1$ response vector $\ybf$ and $n \times p$ predictor matrix $\Xbf$, the simplest variance components model postulates that $\Ybf \sim N(\Xbf \betabf, \Omegabf)$, where
\begin{eqnarray*}
	\Omegabf & = & \sum_{i=1}^m \sigma_i^2 \Vbf_i ,
\end{eqnarray*}
and the $\Vbf_1,\ldots,\Vbf_m$ are $m$ fixed positive semidefinite matrices. The parameters of the 
model can be divided into mean effects $(\beta_1,\ldots,\beta_p)$  and variance components 
$(\sigma_1^2,\ldots,\sigma_m^2)$, summarized by vectors $\betabf$ and $\sigmabf^2$. Throughout 
we assume $\Omegabf$ is positive definite. The extension to singular $\Omegabf$
will not be pursued here. Estimation revolves around the log-likelihood function
\begin{eqnarray}
	L(\betabf, \sigmabf^2) 
	&=&  - \frac 12 \ln \det \Omegabf - \frac{1}{2} (\ybf - \Xbf \betabf)^T \Omegabf^{-1} (\ybf - \Xbf \betabf). \label{eqn:vc-loglike}
\end{eqnarray}
Among the commonly used methods for estimating variance components, maximum likelihood estimation (MLE) \citep{HartleyRao67MLE} and restricted (or residual) MLE (REML) \citep{Harville77REML} are the most popular. REML first projects $\ybf$ to the null space of $\Xbf$ and then estimates variance components based on the projected responses. If the columns of the matrix $\Bbf$ span the null space of
$\Xbf^T$, then REML estimates the $\sigma_i^2$ by maximizing the log-likelihood of the redefined response
vector $\Bbf^T \Ybf$, which is normally distributed with mean $\zerobf$ and covariance $\Bbf^T \Omegabf \Bbf = \sum_{i=1}^m \sigma_i^2 \Bbf^T \Vbf_i \Bbf$. 

There exists a large literature on iterative algorithms for finding MLE and REML  \citep{LairdWare82LMM,LindstromBates88NewtonEMLMM,LindstromBates90NonlinearLMM,HarvilleCallanan90REMLAlgo,CallananHarville91REMLAlgo,BatesPinheiro98Multilevel,SchaferYucel02MLMM}. Fitting variance component models remains a challenge in models with a large sample size $n$ or a large number of variance components $m$. Newton's method \citep{LindstromBates88NewtonEMLMM} converges quickly but is numerically unstable owing to the non-concavity of the log-likelihood. Fisher's scoring algorithm replaces the observed information matrix in Newton's method by the
expected information matrix and yields an ascent algorithm when safeguarded by step halving. However the calculation and inversion of expected information matrices cost $O(mn^3) + O(m^3)$ flops for unstructured $\Vbf_i$ and quickly become impractical when either $n$ or $m$ is large. The expectation-maximization (EM) algorithm initiated by Dempster et al.\ is a third alternative 
\citep{Dempster1977,LairdWare82LMM,LairdLangeStram87LMMEM,LindstromBates88NewtonEMLMM,BatesPinheiro98Multilevel}.  Compared to Newton's method, the EM algorithm is easy to implement and numerically stable, but painfully slow to converge. In practice, a strategy of priming Newton's method by a few EM steps leverages the stability of EM and the faster convergence of second-order methods. Quasi-Newton methods dispense with explicit calculation of the observed information while achieving a superlinear rate of convergence.

In this paper we derive a minorization-maximization (MM) algorithm for finding the MLE and REML estimates of variance components. We prove global convergence of the MM algorithm to a Karush-Kuhn-Tucker (KKT) point and explain why MM generally converges faster than EM for models with more than two variance components. We also sketch extensions of the MM algorithm to the multivariate response model with possibly missing responses, the linear mixed model (LMM), maximum a posteriori (MAP) estimation, penalized estimation, and generalized estimating equations (GEE). The numerical efficiency of the MM algorithm is illustrated through simulated data sets and a genomic example with more than 200 variance components.

\section{Preliminaries}

\subsection*{Background on MM algorithms}

Throughout we reserve Greek letters for parameters and indicate the current iteration number by a superscript $t$. The MM principle for maximizing an objective function $f(\thetabf)$ involves minorizing the objective function $f(\thetabf)$ by a surrogate function $g(\thetabf \mid \thetabf^{(t)})$ around the current iterate $\thetabf^{(t)}$ of a search \citep{Lange00OptTrans}.  Minorization is defined by the
two conditions
\begin{eqnarray}
f(\thetabf^{(t)}) & = & g(\thetabf^{(t)} \mid \thetabf^{(t)})  \label{majorization_definition} \\
f(\thetabf) & \ge & g(\thetabf \mid \thetabf^{(t)})\: , \quad \quad \thetabf \ne \thetabf^{(t)} . \nonumber
\end{eqnarray}
In other words, the surface $\thetabf \mapsto g(\thetabf \mid \thetabf^{(t)})$ lies below the
surface $\thetabf \mapsto f(\thetabf)$ and is tangent to it at the point $\thetabf=\thetabf^{(t)}$.  Construction of the minorizing function $g(\thetabf \mid \thetabf^{(t)})$ constitutes the first
M of the MM algorithm. The second M of the algorithm maximizes the surrogate $g(\thetabf \mid \thetabf^{(t)})$
rather than $f(\thetabf)$.  The point $\thetabf^{(t+1)}$ maximizing 
$g(\thetabf \mid \thetabf^{(t)})$ satisfies the ascent property 
$f(\thetabf^{(t+1)}) \ge f(\thetabf^{(t)})$.  This fact follows from the inequalities
\begin{eqnarray}
f(\thetabf^{(t+1)}) & \ge &  g(\thetabf^{(t+1)} \mid \thetabf^{(t)}) \amp \ge \amp g(\thetabf^{(t)} \mid \thetabf^{(t)}) \amp = \amp f(\thetabf^{(t)}), \label{eqn:MM-monotonicity}
\end{eqnarray}
reflecting the definition of $\thetabf^{(t+1)}$ and the tangency and domination conditions (\ref{majorization_definition}). The ascent property makes the MM algorithm remarkably stable.  The validity of the descent property depends only on increasing $g(\thetabf \mid \thetabf^{(t)})$, not on maximizing $g(\thetabf \mid \thetabf^{(t)})$. With obvious changes, the MM algorithm also applies to minimization rather than to maximization. To minimize a function $f(\thetabf)$, we majorize it by a surrogate function $g(\thetabf \mid \thetabf^{(t)})$ and minimize $g(\thetabf \mid \thetabf^{(t)})$ to produce the next iterate $\thetabf^{(t+1)}$. The acronym should not be confused with the maximization-maximization algorithm in the variational Bayes context \citep{Jeon12Thesis}.

\begin{figure}
\begin{center}
\includegraphics[width=4.5in]{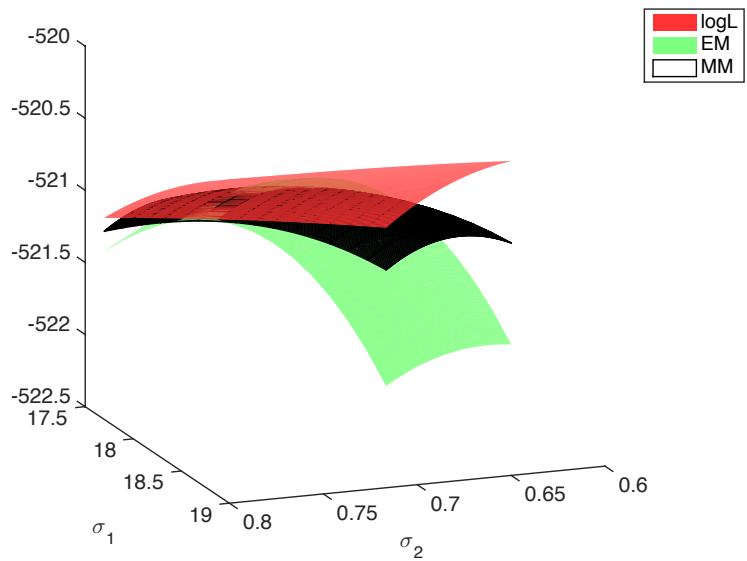}
\end{center}
\caption{Log-likelihood surface of a 2-variance component model and the surrogate functions of EM and MM minorizing the objective function at point ($\sigma_1^{2(t)}, \sigma_2^{2(t)})=(18.5, 0.7)$.}
\label{fig:MM-vs-EM}
\end{figure}

The MM principle \citep{DeLeeuw994block,Heiser1995convergent,Kiers2002setting,Lange00OptTrans,HunterLange04MMTutorial,WuLange10EMMM} finds applications in multidimensional scaling \citep{BorgGroenen2005modern},
ranking of sports teams \citep{Hunter04BreadleyTerry}, variable selection \citep{HunterLi05MM}, optimal experiment design \citep{Yu10OptDesign}, multivariate statistics \citep{ZhouLange10DMMLE}, geometric programming \citep{LangeZhou14GPMM}, and many other areas \citep[Chapter 12]{Lange10NumAnalBook}. The celebrated EM principle \citep{Dempster1977} is a special case of the MM principle. The $Q$ function produced in the E step of an EM algorithm minorizes the log-likelihood up to an irrelevant constant. Thus, both EM and MM share the same advantages: simplicity, stability, graceful adaptation to constraints, and the tendency to avoid large matrix inversion. The more general MM perspective frees algorithm derivation from the missing data straitjacket and invites wider applications \citep{WuLange10EMMM}. Figure \ref{fig:MM-vs-EM} shows the minorization functions of EM and MM for a variance components model with $m=2$ variance components.

EM and MM algorithms often exhibit slow convergence. Fortunately, this defect can be remedied by off-the-shelf acceleration techniques for fixed point iterations. The recently developed squared iterative method (SQUAREM) \citep{Varadhan08SQUAREM} and the quasi-Newton acceleration method \citep{ZhouAlexanderLange09QN} are particularly attractive, given their simplicity and minimal memory and computational costs. Our numerical experiments feature the unadorned MM algorithm and the quasi-Newton accelerated MM (aMM) algorithm based on one secant pair. Using more secant pairs is likely to further improve performance.

\subsection*{Convex matrix functions}

For symmetric matrices we write $\Abf \preceq \Bbf$ when $\Bbf - \Abf$ is positive semidefinite and $\Abf \prec \Bbf$ if $\Bbf - \Abf$ is positive definite. A matrix-valued function $f$ is said to be (matrix) convex if 
\begin{eqnarray*}
	f[\lambda \Abf + (1-\lambda) \Bbf] & \preceq & \lambda f(\Abf) + (1-\lambda) f(\Bbf)
\end{eqnarray*}
for all $\Abf$, $\Bbf$, and $\lambda \in [0,1]$. Our derivation of the MM variance components algorithm hinges on the convexity of the two functions mentioned in the next lemma.
\begin{lemma}
\label{convexity_lemma}
(a) The matrix fractional function $f(\Abf, \Bbf) = \Abf^T \Bbf^{-1} \Abf$ is jointly convex in the
$m \times n$ matrix $\Abf$ and the $m \times m$ positive definite matrix $\Bbf$.
(b) The log determinant function $f(\Bbf) = \ln \det \Bbf$ is concave on the set of positive definite matrices.
\end{lemma}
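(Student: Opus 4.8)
The plan is to dispatch the two parts by separate classical devices: a Schur-complement argument for the joint convexity in (a), and a reduction to a single real variable for the concavity in (b). A pleasant feature of the first device is that it captures \emph{joint} convexity directly, averaging $\Abf$ and $\Bbf$ in one stroke.

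For part (a) I would build on the Schur-complement characterization of positive semidefiniteness: for a positive definite $\Bbf$, the symmetric block matrix
\[
\begin{pmatrix} \Bbf & \Abf \\ \Abf^T & \Cbf \end{pmatrix}
\]
is positive semidefinite if and only if the Schur complement $\Cbf - \Abf^T \Bbf^{-1} \Abf$ is positive semidefinite. First I would note that taking $\Cbf = \Abf^T \Bbf^{-1}\Abf$ makes this Schur complement vanish, so the block matrix built from any admissible pair $(\Abf, \Bbf)$ is positive semidefinite. Next, given two pairs $(\Abf_1, \Bbf_1)$ and $(\Abf_2, \Bbf_2)$ and a weight $\lambda \in [0,1]$, I would form the convex combination of their block matrices. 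Since the positive semidefinite matrices form a convex cone, the averaged block matrix
\[
\begin{pmatrix} \lambda \Bbf_1 + (1-\lambda)\Bbf_2 & \lambda\Abf_1 + (1-\lambda)\Abf_2 \\ \lambda\Abf_1^T + (1-\lambda)\Abf_2^T & \lambda\Abf_1^T\Bbf_1^{-1}\Abf_1 + (1-\lambda)\Abf_2^T\Bbf_2^{-1}\Abf_2 \end{pmatrix}
\]
is again positive semidefinite, and its leading block $\lambda\Bbf_1 + (1-\lambda)\Bbf_2$ remains positive definite. Reading the Schur-complement criterion backwards for this averaged matrix then yields
\[
[\lambda\Abf_1 + (1-\lambda)\Abf_2]^T [\lambda\Bbf_1 + (1-\lambda)\Bbf_2]^{-1} [\lambda\Abf_1 + (1-\lambda)\Abf_2] \preceq \lambda\Abf_1^T\Bbf_1^{-1}\Abf_1 + (1-\lambda)\Abf_2^T\Bbf_2^{-1}\Abf_2,
\]
which is exactly the asserted joint convexity.

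For part (b) I would invoke the standard reduction that a function is concave on a convex set precisely when its restriction to each line is concave. Fixing a positive definite $\Bbf$ and a symmetric direction $\Hbf$, I would examine $\phi(t) = \ln\det(\Bbf + t\Hbf)$ over the open interval of $t$ for which $\Bbf + t\Hbf$ stays positive definite. Factoring $\Bbf + t\Hbf = \Bbf^{1/2}(\Ibf + t\Mbf)\Bbf^{1/2}$ with the symmetric matrix $\Mbf = \Bbf^{-1/2}\Hbf\Bbf^{-1/2}$, the determinant factors as $\det\Bbf \cdot \prod_i (1 + t\mu_i)$ over the eigenvalues $\mu_i$ of $\Mbf$, so that $\phi(t) = \ln\det\Bbf + \sum_i \ln(1 + t\mu_i)$. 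Each summand has second derivative $-\mu_i^2/(1 + t\mu_i)^2 \le 0$, hence $\phi$ is concave and the claim follows.

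The routine ingredients---verifying the Schur-complement identity and differentiating the logarithms---are standard, so the only point demanding genuine care is the backward application in part (a): the Schur-complement criterion can only be invoked once we know the averaged leading block is positive definite. This is guaranteed by convexity of the positive definite cone, so no real obstacle remains; the argument is essentially bookkeeping around these two well-known facts.
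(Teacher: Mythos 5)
Your proposal is correct and takes essentially the same approach as the paper: the paper proves (a) by observing that the epigraph of the matrix fractional function coincides with the set where the block matrix $\begin{pmatrix} \Bbf & \Abf \\ \Abf^T & \Cbf \end{pmatrix}$ is positive semidefinite (citing Boyd and Vandenberghe for this equivalence), which is exactly the Schur-complement criterion you apply directly to convex combinations, and it cites the same source for (b), whose standard proof is precisely your line-restriction argument. You have merely written out in full the two classical facts the paper outsources to its reference.
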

\begin{proof}
The matrix fractional function is matrix convex because its epigraph
\begin{eqnarray*}
	\{(\Abf, \Bbf, \Cbf): \Bbf \succ \zerobf, \Abf^T \Bbf^{-1} \Abf \preceq \Cbf\}
	& = & \left\{ (\Abf, \Bbf, \Cbf): \Bbf \succ \zerobf, \begin{pmatrix} \Bbf & \Abf \\ \Abf^T & \Cbf \end{pmatrix} \succeq \zerobf \right\}
\end{eqnarray*}
is a convex set. Here $\Cbf$ varies over the set of $n \times n$ positive semidefinite matrices.
The equivalence of these two epigraph representations is proved in \citet[A.5.5]
{BoydVandenberghe04Book}. For the concavity of the log determinant, see \citet[p74]{BoydVandenberghe04Book}.
\end{proof}

\section{Univariate response model}
\label{sec:univ-y}

Our strategy for maximizing the log-likelihood \eqref{eqn:vc-loglike} is to alternate updating the mean parameters $\betabf$ and the variance components $\sigmabf^2$. Updating $\betabf$ given $\sigmabf^2$ is a standard general least squares problem with solution
\begin{eqnarray}
\betabf^{(t+1)} & = & (\Xbf^T \Omegabf^{-(t)}\Xbf)^{-1}\Xbf^T \Omegabf^{-(t)}\ybf.
\label{generalized_least_squares}
\end{eqnarray}
Updating $\sigmabf^2$ given $\betabf^{(t)}$ depends on two minorizations. If we assume that all of the $\Vbf_i$ are positive definite,
then the joint convexity of the map $(\Xbf, \Ybf) \mapsto \Xbf^T \Ybf^{-1} \Xbf$ 
for positive definite $\Ybf$ implies that
\begin{eqnarray*}
 \Omegabf^{(t)} \Omegabf^{-1} \Omegabf^{(t)} 
	&=& \left( \sum_{i=1}^m \sigma_i^{2(t)} \Vbf_i \right)  \left( \sum_{i=1}^m \sigma_i^2 \Vbf_i \right)^{-1} \left( \sum_{i=1}^m \sigma_i^{2(t)} \Vbf_i \right) \nonumber \\
	&=&  \left( \sum_{i=1}^m \frac{\sigma_i^{2(t)}}{\sum_{j} \sigma_{j}^{2(t)}} \frac{\sum_{j} \sigma_{j}^{2(t)}}{\sigma_i^{2(t)}}  \sigma_i^{2(t)} \Vbf_i \right) \cdot \left( \sum_{i=1}^m \frac{\sigma_i^{2(t)}}{\sum_{j} \sigma_{j}^{2(t)}} \frac{\sum_{j} \sigma_{j}^{2(t)}}{\sigma_i^{2(t)}} \sigma_i^2 \Vbf_i \right)^{-1} \nonumber \\
	& & \cdot \left( \sum_{i=1}^m \frac{\sigma_i^{2(t)}}{\sum_{j} \sigma_{j}^{2(t)}} \frac{\sum_{j} \sigma_{j}^{2(t)}}{\sigma_i^{2(t)}}  \sigma_i^{2(t)} \Vbf_i \right) \nonumber \\
	&\preceq& \sum_{i=1}^m \frac{\sigma_i^{2(t)}}{\sum_{j} \sigma_{j}^{2(t)}} \left( \frac{\sum_{j} \sigma_{j}^{2(t)}}{\sigma_i^{2(t)}} \sigma_i^{2(t)} \Vbf_i \right) \left( \frac{\sum_{j} \sigma_{j}^{2(t)}}{\sigma_i^{2(t)}} \sigma_i^{2} \Vbf_i \right)^{-1} \left( \frac{\sum_{j} \sigma_{j}^{2(t)}}{\sigma_i^{2(t)}} \sigma_i^{2(t)} \Vbf_i \right)  \\
	&=& \sum_{i=1}^m \frac{\sigma_i^{4(t)}}{\sigma_i^2} \Vbf_i \Vbf_i^{-1} \Vbf_i \nonumber \\
	&=& \sum_{i=1}^m \frac{\sigma_i^{4(t)}}{\sigma_i^2} \Vbf_i. \nonumber
\end{eqnarray*}
When one or more of the $\Vbf_i$ are rank deficient, we replace each $\Vbf_i$ by 
$\Vbf_{i,\epsilon} = \Vbf_i + \epsilon \Ibf$ for $\epsilon>0$ small. 
Sending $\epsilon$ to 0 in the just proved majorization 
\begin{eqnarray*}
\Omegabf_\epsilon^{(t)} \Omegabf_\epsilon^{-1} \Omegabf_\epsilon^{(t)} & \preceq &
	\sum_{i=1}^m \frac{\sigma_i^{4(t)}}{\sigma_i^2} \Vbf_{i,\epsilon}
\end{eqnarray*}
gives the desired majorization
\begin{eqnarray*}
\Omegabf^{(t)} \Omegabf^{-1} \Omegabf^{(t)} & \preceq & \sum_{i=1}^m \frac{\sigma_i^{4(t)}}{\sigma_i^2} \Vbf_{i} 
\end{eqnarray*}
in the general case. Negating both sides leads to the minorization 
\begin{eqnarray}
	- (\ybf - \Xbf \betabf)^T \Omegabf^{-1} (\ybf - \Xbf \betabf) & \succeq & - (\ybf - \Xbf \betabf)^T \Omegabf^{-(t)} \left( \sum_{i=1}^m \frac{\sigma_i^{4(t)}}{\sigma_i^2} \Vbf_{i} \right) \Omegabf^{- (t)} (\ybf - \Xbf \betabf) \label{eqn:matrix-ineq}
\end{eqnarray}
that effectively separates the variance components $\sigma_1^2, \ldots, \sigma_m^2$ in the quadratic term of the log-likelihood \eqref{eqn:vc-loglike}. 

The convexity of the function $\Abf \mapsto - \log \det \Abf$ is equivalent
to the supporting hyperplane minorization 
\begin{eqnarray}
-\ln \det \Omegabf & \ge & -\ln \det \Omegabf^{(t)} - \tr [\Omegabf^{-(t)} (\Omegabf - \Omegabf^{(t)})] \label{eqn:logdet-majorization}
\end{eqnarray}
that separates $\sigma_1^2, \ldots, \sigma_m^2$ in the log determinant term of the log-likelihood \eqref{eqn:vc-loglike}. Combination of the minorizations \eqref{eqn:matrix-ineq} and \eqref{eqn:logdet-majorization} gives the overall minorization
\begin{eqnarray}
	& & g(\sigmabf^2 \mid \sigmabf^{2(t)}) \nonumber \\
	&=& - \frac 12 \tr (\Omegabf^{-(t)} \Omegabf) - \frac 12 (\ybf - \Xbf \betabf^{(t)})^T \Omegabf^{-(t)} \left( \sum_{i=1}^m \frac{\sigma_i^{4(t)}}{\sigma_i^2} \Vbf_i \right) \Omegabf^{-(t)} (\ybf - \Xbf \betabf^{(t)}) + c^{(t)} \label{g_def} \\
	&=& \sum_{i=1}^m \left[- \frac{\sigma_i^2}{2} \tr (\Omegabf^{-(t)}  \Vbf_i) - \frac 12 \frac{\sigma_i^{4(t)}}{\sigma_i^2} (\ybf - \Xbf \betabf^{(t)})^T \Omegabf^{-(t)} \Vbf_i \Omegabf^{-(t)} (\ybf - \Xbf \betabf^{(t)}) \right] + c^{(t)}, \nonumber
\end{eqnarray}
where $c^{(t)}$ is an irrelevant constant.
Maximization of $g(\sigmabf^2 \mid \sigmabf^{2(t)})$ with respect to $\sigma_i^2$ yields the lovely multiplicative update
\begin{eqnarray}
	\sigma_i^{2(t+1)} & = & \sigma_i^{2(t)} \sqrt{\frac{(\ybf - \Xbf \betabf^{(t)})^T \Omegabf^{-(t)} \Vbf_i \Omegabf^{-(t)} (\ybf - \Xbf \betabf^{(t)})}{\tr (\Omegabf^{-(t)}  \Vbf_i)}}, \quad i = 1,\ldots, m. \label{eqn:MM-update-univariate}
\end{eqnarray}
To preserve the uniqueness and continuity of the algorithm map, 
we must take $\sigma_i^{2(t+1)} = 0$ whenever $\sigma_i^{2(t)} = 0$. 
As a sanity check on our derivation, consider the partial derivative
\begin{eqnarray}
\frac{\partial}{\partial \sigma_i^2} L(\betabf, \sigmabf^2) & = &
-\frac{1}{2}\tr (\Omegabf^{-1}  \Vbf_i)
+\frac{1}{2}(\ybf - \Xbf \betabf)^T \Omegabf^{-1} \Vbf_i \Omegabf^{-1} (\ybf - \Xbf \betabf). \label{eqn:L-gradient}
\end{eqnarray}
Given $\sigma_i^{2(t)}>0$, it is clear from the update formula \eqref{eqn:MM-update-univariate} that $\sigma_i^{2(t+1)} < \sigma_i^{2(t)}$ when $\frac{\partial}{\partial \sigma_i^2} L<0$.
Conversely $\sigma_i^{2(t+1)} > \sigma_i^{2(t)}$ when $\frac{\partial}{\partial \sigma_i^2} L>0$. Algorithm \ref{algo:varcomp-univariate-y} summarizes the MM algorithm for MLE of the univariate response model \eqref{eqn:vc-loglike}.

\begin{algorithm}[ht!]
\SetKwInOut{Input}{Input}\SetKwInOut{Output}{Output}
\Input{$\ybf$, $\Xbf$, $\Vbf_1, \ldots, \Vbf_m$}
\Output{MLE $\hat \betabf$, $\hat \sigma_1^2, \ldots, \hat \sigma_m^2$}
Initialize $\sigma_i^{(0)}>0$, $i=1,\ldots,m$ \;
\Repeat{objective value converges}{
$\Omegabf^{(t)} \gets \sum_{i=1}^m \sigma_i^{2(t)} \Vbf_i$ \;
$\betabf^{(t)} \gets \arg \min_{\betabf} \, (\ybf - \Xbf \betabf)^T \Omegabf^{-(t)} (\ybf - \Xbf \betabf)$ 
$\sigma_i^{2(t+1)} \gets \sigma_i^{2(t)} \sqrt{\frac{(\ybf - \Xbf \betabf^{(t)})^T \Omegabf^{-(t)} \Vbf_i \Omegabf^{-(t)} (\ybf - \Xbf \betabf^{(t)})}{\tr (\Omegabf^{-(t)}  \Vbf_i)}}, \quad i = 1,\ldots, m$ \;
}
\caption{MM algorithm for MLE of the variance components of model \eqref{eqn:vc-loglike}.}
\label{algo:varcomp-univariate-y}
\end{algorithm}

The update formula (\ref{eqn:MM-update-univariate}) assumes that the
numerator under the square root sign is nonnegative and the denominator is
positive. The numerator requirement is a consequence of the positive
semidefiniteness of $\Vbf_i$. The denominator requirement can be verified
through the Hadamard (elementwise) product representation 
$\tr(\Omegabf^{-(t)}\Vbf_i) = \onebf^T (\Omegabf^{-(t)} \odot \Vbf_i) \onebf$.
The following lemma of \citet{Schur11Hadamard} is crucial. We give a self-contained probabilistic proof.
\begin{lemma}[Schur]
The Hadamard product of a positive definite matrix with a positive semidefinite matrix with positive diagonal entries is positive definite. 
\end{lemma}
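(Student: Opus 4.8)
The plan is to give a probabilistic proof built on a Gaussian representation of the Hadamard product. Let $\Abf$ be the positive definite matrix and $\Bbf$ the positive semidefinite matrix with positive diagonal. I would introduce independent Gaussian vectors $\ubf \sim N(\zerobf, \Abf)$ and $\vbf \sim N(\zerobf, \Bbf)$ and form the entrywise product $\wbf$ with coordinates $w_i = u_i v_i$. A one-line computation using independence gives $\Cov(w_i, w_j) = \E(u_i u_j)\,\E(v_i v_j) = A_{ij} B_{ij}$, so $\Abf \odot \Bbf$ is exactly the covariance matrix of $\wbf$. This immediately delivers positive semidefiniteness (the classical Schur product theorem); the real work is upgrading this to strict positive definiteness.

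For the upgrade, I would fix a nonzero vector $\cbf$ and write $\cbf^T (\Abf \odot \Bbf) \cbf = \Var(\sum_i c_i u_i v_i)$, which equals $\E[(\sum_i c_i u_i v_i)^2]$ because the linear combination has mean zero. The decisive step is to condition on $\vbf$ rather than on $\ubf$. Given $\vbf$, the sum $\sum_i c_i v_i u_i$ is a mean-zero linear form in $\ubf$ with conditional variance $\dbf^T \Abf \dbf$, where $\dbf$ has coordinates $d_i = c_i v_i$. Here the asymmetry between the two factors is essential: conditioning on the semidefinite factor $\vbf$ leaves a quadratic form in the definite matrix $\Abf$, so I can bound $\dbf^T \Abf \dbf \geq \lambda_{\min}(\Abf)\,\|\dbf\|^2$ with $\lambda_{\min}(\Abf) > 0$.

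Taking expectations over $\vbf$ and using $\E(v_i^2) = B_{ii}$ then yields the clean lower bound $\cbf^T (\Abf \odot \Bbf) \cbf \geq \lambda_{\min}(\Abf) \sum_i c_i^2 B_{ii}$. Since $\lambda_{\min}(\Abf) > 0$, each $B_{ii} > 0$, and $\cbf \neq \zerobf$, the right-hand side is strictly positive, which is exactly the claim. I expect the main obstacle to be organizing the argument so that the mere positive semidefiniteness of $\Bbf$ does not obstruct strict positivity: a symmetric-looking attempt that conditions on $\ubf$ would instead leave the quadratic form $\dbf^T \Bbf \dbf$, which can vanish for $\dbf \neq \zerobf$ and breaks down. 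Conditioning on $\vbf$ is what routes the argument through the positive definite factor $\Abf$, and the positive-diagonal hypothesis on $\Bbf$ enters precisely and only through the identities $\E(v_i^2) = B_{ii}$ in the final bound.
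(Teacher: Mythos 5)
Your proof is correct and takes essentially the same route as the paper: both represent $\Abf \odot \Bbf$ as the covariance matrix of the entrywise product of independent Gaussian vectors and then condition on the factor with the semidefinite covariance to route strict positivity through the positive definite matrix $\Abf$. The only difference is in the endgame — the paper argues by contradiction (zero variance would force $\vbf \odot \Ybf = \zerobf$ almost surely, hence some $Y_i = 0$ almost surely, contradicting $b_{ii} > 0$), while you reach the same conclusion directly through the quantitative bound $\cbf^T (\Abf \odot \Bbf)\, \cbf \ge \lambda_{\min}(\Abf) \sum_i c_i^2 B_{ii}$, a minor stylistic refinement rather than a different method.
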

\begin{proof} 
Let $\Xbf = (X_1, \ldots, X_n)^T$ be a random normal vector with mean $\zerobf$ and positive definite covariance matrix $\Abf$. Let $\Ybf = (Y_1, \ldots, Y_n)^T$ be a random normal vector 
independent of $\Xbf$ with mean $\zerobf$ and positive semidefinite covariance matrix $\Bbf$ having positive diagonal entries. Then $\Zbf = \Xbf \odot \Ybf$ has covariances $\E (Z_i Z_j) = \E (X_iY_i X_jY_j) = \E (X_iX_j) \E(Y_i Y_j) = a_{ij} b_{ij}$. It follows that $\Cov(\Zbf) = \Abf \odot \Bbf$. To show $\Abf \odot \Bbf$ is positive definite, suppose on the contrary that 
$\vbf^T (\Abf \odot \Bbf) \vbf = \Var(\vbf^T \Zbf) = 0$ for some $\vbf \ne \zerobf$. Then
\begin{eqnarray*}
	0 &\!\!=\!\!& \Var(\vbf^T \Zbf) = \E \Big(\sum_i v_i X_i Y_i\Big)^2 = \E \Big[ \Big(\sum_i v_i X_i Y_i\Big)^2 \mid \Ybf \Big] = \E [(\vbf \odot \Ybf)^T \Abf (\vbf \odot \Ybf)]
\end{eqnarray*}
implies $\vbf \odot \Ybf = \zerobf$ with probability 1. Since $\vbf \ne \zerobf$, $Y_i = 0$ with probability 1 for some $i$. This contradicts the assumption $b_{ii} = \Var (Y_i) > 0$ for all $i$. 
\end{proof}

We can now obtain the following characterization of the MM iterates.
\begin{proposition}
\label{prop:nonzero-denom}
Assume $\Vbf_i$ has strictly positive diagonal entries. Then $\tr(\Omegabf^{-(t)} \Vbf_i) > 0$ for all $t$. Furthermore if $\sigma_i^{2(0)} > 0$ and $\Omegabf^{-(t)} (\ybf - \Xbf \betabf^{(t)}) \notin \text{null}(\Vbf_i)$ for all $t$, then $\sigma_i^{2(t)} > 0$ for all $t$. When $\Vbf_i$ is
positive definite, $\sigma_i^{2(t)} > 0$ holds if and only if $\ybf \ne \Xbf \betabf^{(t)}$.
\end{proposition}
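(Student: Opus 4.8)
The plan is to establish the three assertions in turn, reusing the preceding lemma of Schur for the first and the positive semidefiniteness of $\Vbf_i$ together with a short induction for the second and third. The only genuinely new idea is needed for the biconditional in the last assertion.

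For the trace claim I would rewrite the denominator through the Hadamard product identity $\tr(\Omegabf^{-(t)}\Vbf_i) = \onebf^T(\Omegabf^{-(t)} \odot \Vbf_i)\onebf$ recorded just above the statement. Since $\Omegabf^{(t)} = \sum_j \sigma_j^{2(t)}\Vbf_j$ is positive definite by the standing assumption, its inverse $\Omegabf^{-(t)}$ is positive definite, and by hypothesis $\Vbf_i$ is positive semidefinite with strictly positive diagonal entries. Schur's lemma then makes $\Omegabf^{-(t)} \odot \Vbf_i$ positive definite, so its quadratic form against $\onebf \neq \zerobf$ is strictly positive. This settles the first claim and guarantees a positive denominator in the update \eqref{eqn:MM-update-univariate} at every iteration.

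For the second claim I would set $\wbf^{(t)} = \Omegabf^{-(t)}(\ybf - \Xbf\betabf^{(t)})$ and observe that the numerator under the square root is exactly $\wbf^{(t)T}\Vbf_i\wbf^{(t)}$. Factoring $\Vbf_i = \Lbf\Lbf^T$ shows this quadratic form is nonnegative and vanishes precisely when $\Vbf_i\wbf^{(t)} = \zerobf$, that is, when $\wbf^{(t)} \in \text{null}(\Vbf_i)$. Under the hypothesis $\wbf^{(t)} \notin \text{null}(\Vbf_i)$ the numerator is strictly positive, and combined with the positive denominator from the first claim the multiplicative factor in \eqref{eqn:MM-update-univariate} is strictly positive. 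A one-line induction starting from $\sigma_i^{2(0)} > 0$ then yields $\sigma_i^{2(t)} > 0$ for all $t$.

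The third claim is where the real work lies, and the crux is an observation about the generalized least squares step that decouples the condition from the iteration index. When $\Vbf_i$ is positive definite, $\text{null}(\Vbf_i) = \{\zerobf\}$ and $\Omegabf^{-(t)}$ is invertible, so the numerator is positive exactly when $\ybf \neq \Xbf\betabf^{(t)}$. I would then note that the fitted value $\Xbf\betabf^{(t)} = \Xbf(\Xbf^T\Omegabf^{-(t)}\Xbf)^{-1}\Xbf^T\Omegabf^{-(t)}\ybf$ equals $\ybf$ if and only if $\ybf$ lies in $\spn(\Xbf)$, a condition independent of the weight $\Omegabf^{(t)}$ and hence of $t$, because the relevant oblique projection fixes $\spn(\Xbf)$ pointwise. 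Consequently either $\ybf \neq \Xbf\betabf^{(t)}$ at every iteration, in which case the induction of the second claim gives $\sigma_i^{2(t)} > 0$ throughout, or $\ybf = \Xbf\betabf^{(t)}$ at every iteration, in which case each numerator vanishes and $\sigma_i^{2(t)} = 0$ for $t \geq 1$; either way the asserted equivalence holds. The one delicate point I expect to flag is the initial iterate, where $\sigma_i^{2(0)} > 0$ is imposed by fiat even when $\ybf \in \spn(\Xbf)$, so the biconditional is to be read for the iterates $t \geq 1$ actually produced by the update.
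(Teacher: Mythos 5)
Your proof is correct and follows essentially the same route as the paper's: Schur's lemma via the Hadamard identity $\tr(\Omegabf^{-(t)}\Vbf_i) = \onebf^T(\Omegabf^{-(t)} \odot \Vbf_i)\onebf$ for the first claim, induction on the multiplicative update for the second, and $\text{null}(\Vbf_i)=\{\zerobf\}$ for the third. Your extra observation that $\ybf = \Xbf\betabf^{(t)}$ holds either at every iteration or at none---because the generalized least squares fit is an oblique projection fixing $\spn(\Xbf)$ pointwise, making the condition independent of the weight matrix $\Omegabf^{(t)}$---goes beyond the paper's one-line remark and usefully resolves the $t$-dependence in the biconditional, but it is a refinement rather than a different approach.
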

\begin{proof}
The first claim follows easily from Schur's lemma. The second claim follows by induction.
The third claim follows from the observation that $\text{null}(\Vbf_i) = \{\zerobf\}$.
\end{proof}

In most applications, $\Vbf_m = \Ibf$. Proposition \ref{prop:nonzero-denom} guarantees that if
$\sigma_m^{2(0)}>0$ and the residual vector $\ybf - \Xbf \betabf^{(t)}$ is nonzero, then
$\sigma_m^{2(t)}$ remains positive and thus $\Omegabf^{(t)}$ remains positive definite throughout all iterations. This fact does not prevent any of the sequences $\sigma_i^{2(t)}$ from converging to 0. In this sense, the MM algorithm acts like an interior point method, approaching the optimum from inside the feasible region.

\subsection*{Univariate response: two variance components}

\begin{algorithm}[th!]
\SetKwInOut{Input}{Input}\SetKwInOut{Output}{Output}
\Input{$\ybf$, $\Xbf$, $\Vbf_1, \Vbf_2$}
\Output{MLE $\hat \betabf$, $\hat \sigma_1^2, \hat \sigma_2^2$}
Simultaneous congruence decomposition: $(\Dbf, \Ubf) \gets (\Vbf_1, \Vbf_2)$ \;
Transform data: $\tilde \ybf \gets \Ubf^T \ybf$, $\tilde \Xbf \gets \Ubf^T \Xbf$ \;
Initialize $\sigma_1^{(0)}, \sigma_1^{(0)} > 0$ \;
\Repeat{objective value converges}{
$w_i^{(t)} \gets (\sigma_1^{2(t)} d_i + \sigma_2^{2(t)})^{-1}, \quad i=1,\ldots,n$ \;
$\betabf^{(t)} \gets \arg \min_{\betabf} \, \sum_{i=1}^n w_i^{(t)} (\tilde y_i - \tilde \xbf_i^T \betabf)^2$ 
$\sigma_1^{2(t+1)} \gets \sigma_1^{2(t)} \sqrt{\frac{(\tilde \ybf - \tilde \Xbf \betabf^{(t)})^T (\sigma_1^{2(t)} \Dbf + \sigma_2^{2(t)} \Ibf)^{-1} \Dbf (\sigma_1^{2(t)} \Dbf + \sigma_2^{2(t)} \Ibf)^{-1} (\tilde \ybf - \tilde \Xbf \betabf^{(t)})}{\tr [(\sigma_1^{2(t)} \Dbf + \sigma_2^{2(t)} \Ibf)^{-1} \Dbf]}}$ \;
$\sigma_2^{2(t+1)} \gets \sigma_2^{2(t)} \sqrt{\frac{(\tilde \ybf - \tilde \Xbf \betabf^{(t)})^T (\sigma_1^{2(t)} \Dbf + \sigma_2^{2(t)} \Ibf)^{-2} (\tilde \ybf - \tilde \Xbf \betabf^{(t)})}{\tr [(\sigma_1^{2(t)} \Dbf + \sigma_2^{2(t)} \Ibf)^{-1}]}}$ \;
}
\caption{Simplified MM algorithm for MLE of model \eqref{eqn:vc-loglike} with $m=2$ variance components and $\Omegabf = \sigma_1^2 \Vbf_1 + \sigma_2^2 \Vbf_2$.}
\label{algo:varcomp-univariate-2vc}
\end{algorithm}

The major computational cost of Algorithm 1 is inversion of the covariance matrix $\Omegabf^{(t)}$ at each iteration. Problem specific structures such as block diagonal matrices or a diagonal matrix plus a low-rank matrix are often exploited to speed up matrix inversion. The special case of $m=2$ variance components deserves attention as repeated matrix inversion can be avoided by invoking the 
simultaneous congruence decomposition for two symmetric matrices, one of which is positive definite
\citep{Rao2009linear, HornJohnson85}. This decomposition is also called the generalized eigenvalue decomposition \citep{Golub96Book,BoydVandenberghe04Book}. If one assumes $\Omegabf = \sigma_1^2 \Vbf_1 + \sigma_2^2 \Vbf_2$ and lets $(\Vbf_1, \Vbf_2) \mapsto (\Dbf, \Ubf)$ be the decomposition with $\Ubf$ nonsingular, $\Ubf^T \Vbf_1 \Ubf = \Dbf$ diagonal, and $\Ubf^{T} \Vbf_2 \Ubf =  \Ibf$, then 
\begin{eqnarray}
\Omegabf^{(t)} &=& \Ubf^{-T} (\sigma_1^{2(t)} \Dbf + \sigma_2^{2(t)} \Ibf_n) \Ubf^{-1} \nonumber \\
\Omegabf^{-(t)} &=& \Ubf (\sigma_1^{2(t)} \Dbf + \sigma_2^{2(t)} \Ibf_n)^{-1} \Ubf^T  \nonumber \\
\det(\Omegabf^{(t)}) &=& \det  (\sigma_1^{2(t)} \Dbf + \sigma_2^{2(t)} \Ibf_n) \det (\Ubf^{-T} \Ubf^{-1}) \label{mat_decomposition}\\
&=& \det  (\sigma_1^{2(t)} \Dbf + \sigma_2^{2(t)} \Ibf_n) \det (\Vbf_2). \nonumber
\end{eqnarray}
With the revised responses $\tilde \ybf = \Ubf^T \ybf$ and the revised predictor matrix 
$\tilde \Xbf = \Ubf^T \Xbf$, the update \eqref{eqn:MM-update-univariate} requires 
only vector operations and costs $O(n)$ flops. Updating the fixed effects is a weighted least squares problem with the transformed data $(\tilde \ybf, \tilde \Xbf)$ and observation weights $w_i^{(t)} = (\sigma_1^{2(t)} d_i + \sigma_2^{2(t)})^{-1}$. Algorithm \ref{algo:varcomp-univariate-2vc} summarizes the simplified MM algorithm for two variance components.

\subsection*{Numerical experiments}

This section compares the numerical performance of MM, quasi-Newton accelerated MM, EM, and Fisher scoring on simulated data from a two-way ANOVA random effects model and a genetic model. For ease of comparison, all algorithm runs start from $\sigmabf^{2(0)} = \onebf$ and terminate when the relative change $(L^{(t+1)} - L^{(t)})/(|L^{(t)}| + 1)$ in the log-likelihood is less than $10^{-6}$.

{\bf Two-way ANOVA:} We simulated data from a two-way ANOVA random effects model
\begin{eqnarray*}
	y_{ijk} = \mu + \alpha_i + \beta_j + (\alpha \beta)_{ij} + \epsilon_{ijk}, 
\end{eqnarray*}
where $\alpha_i \sim N(0,\sigma_1^2)$, $\beta_j \sim N(0,\sigma_2^2)$, $(\alpha \beta)_{ij} \sim N(0, \sigma_3^2)$, and $\epsilon_{ijk} \sim N(0,\sigma_e^2)$ are jointly independent. This corresponds to $m=4$ variance components. In the simulation, we set $\sigma_2^2 = \sigma_3^2 = \sigma_e^2$ and varied the ratio $\sigma_1^2 / \sigma_e^2$; the numbers of levels $a$ and $b$ in factor 1 and factor 2, respectively; and the number of observations $c$ in each combination of factor levels. For each simulation scenario, we simulated 50 replicates. The sample size was $n= abc$ for each replicate.

Tables \ref{tab:iter5} and \ref{tab:run5} show the average number of iterations and the average runtimes when there are $a=b=5$ levels of each factor. Based on these results and further results not shown for other combinations of $a$ and $b$, we draw the following conclusions. Fisher scoring takes the fewest iterations. The MM algorithm always takes fewer iterations than the EM algorithm. Accelerated MM further improves the convergence rate of MM. The faster rate of convergence of Fisher scoring is outweighed by the extra cost of evaluating and inverting the covariance matrix. When the sample size $n=abc$ is large, Fisher scoring takes much longer than either EM or MM.

{\bf Genetic model:} We simulated a quantitative trait $\ybf$ from a genetic model with two variance components and covariance matrix $\Omegabf = \sigma_a^2 \widehat \Phibf + \sigma_e^2 \Ibf$, where $\widehat \Phibf$ is a full-rank empirical kinship matrix estimated from the genome-wide measurements of 212 individuals using Option 29 of the Mendel software \citep{Lange13Mendel,Lange05association}. In this example, Fisher scoring excels at smaller $\sigma_a^2 / \sigma_e^2$ ratios, while accelerated MM is fastest at larger $\sigma_a^2 / \sigma_e^2$ ratios.

In summary, the MM algorithm appears competitive even in small-scale examples. Modern applications often involve a large number of variance components. In this setting, the EM algorithm suffers from slow convergence and Fisher scoring from an extremely high cost per iteration. Our genomic example in Section \ref{sec:numerical-example} reinforces this point.

\begin{table}
	\centering
	\begin{tabular}{rrrrrr}
	\toprule
		$\sigma^2_{1}/\sigma^2_{e}$ & Method &
        \multicolumn{4}{c}{$c=$ \# observations per combination} \\
		\cline{3-6}
		 & & 2 & 8 & 20 & 50  \\
\midrule
		   0 & MM & 34.52(15.79) & 25.90(8.69) & 18.62(7.22) & 15.48(5.34) \\
		&aMM & 16.68(5.69) & 13.48(3.47) & 11.76(3.12) & 10.88(2.43) \\
		&EM & 123.70(63.72) & 61.58(31.36) & 38.44(18.58) & 25.66(10.31) \\
		&FS &  6.10(1.09) &  6.74(0.99) &  6.68(0.79) &  6.36(0.72) \\[0.25cm]

		 0.05 & MM & 27.78(13.05) & 22.82(8.96) & 19.82(6.55) & 15.48(3.97) \\
		&aMM & 14.80(4.33) & 12.32(3.27) & 12.08(2.62) & 11.20(2.52) \\
		&EM & 108.04(62.58) & 58.42(33.67) & 43.52(19.48) & 27.62(12.47) \\
		&FS &  6.20(1.29) &  6.72(1.25) &  6.62(0.73) &  6.60(1.07) \\[0.25cm]

		 0.1 & MM & 31.26(14.90) & 23.38(9.21) & 16.84(6.72) & 14.88(4.56) \\
         &aMM & 15.96(5.65) & 12.72(3.59) & 10.36(2.51) & 10.80(2.46) \\
		&EM & 112.12(72.70) & 62.26(28.87) & 34.86(22.61) & 24.10(11.96) \\
		&FS &  6.10(1.25) &  6.90(0.79) &  6.48(0.86) &  6.52(0.86) \\[0.25cm]

		 1 & MM & 29.72(15.85) & 22.72(10.86) & 17.78(8.18) & 13.94(4.73) \\
		&aMM & 15.24(5.60) & 12.40(4.12) & 10.72(2.70) & 10.24(2.01) \\
		&EM & 85.86(63.85) & 41.50(30.46) & 28.40(20.02) & 21.36(13.86) \\
		&FS &  5.96(1.19) &  6.90(0.91) &  6.36(1.05) &  6.44(0.93) \\[0.25cm]

		 10 & MM & 16.46(9.74) & 13.28(7.75) & 12.80(6.41) & 10.74(3.67) \\
		&aMM & 11.60(3.70) &  9.36(2.78) &  9.04(3.00) &  8.68(2.54) \\
		&EM & 24.50(32.87) & 16.18(23.06) & 15.10(16.55) & 12.36(11.13) \\
		&FS &  6.98(0.80) &  6.96(0.70) &  6.74(0.83) &  6.76(0.52) \\[0.25cm]

		 20 & MM & 17.34(10.70) & 14.20(6.79) & 11.58(4.46) & 10.16(4.26) \\
		&aMM & 12.12(5.96) &  9.92(2.68) &  8.92(2.07) &  8.48(2.00) \\
		&EM & 31.08(42.11) & 20.50(24.55) & 10.84(10.86) &  8.98(8.94) \\
		&FS &  7.18(0.98) &  7.02(0.82) &  6.90(0.74) &  6.78(0.79) \\[0.25cm]

\bottomrule
\end{tabular}
\caption{Average iterations until convergence for MM, quasi-Newton accelerated MM (aMM), EM, and Fisher scoring (FS) for fitting a two-way ANOVA model with $a = b = 5$ levels of both factors. Standard errors are given in parentheses.}
    \label{tab:iter5}
\end{table}

\begin{table}
	\centering
	\begin{tabular}{rrrrrr}
	\toprule
		$\sigma^2_{1}/\sigma^2_{e}$ & Method &
        \multicolumn{4}{c}{$c$ = \# observations per combination} \\
		\cline{3-6}
		 & & 2 & 8 & 20 & 50  \\
\midrule
		0 & MM & 30.50(81.58) & 132.50(48.51) & 739.80(272.66) & 4004.10(1317.90) \\
		&aMM& 16.76(36.57) & 92.06(34.42) & 638.76(231.86) & 3691.90(873.70) \\
		&EM & 70.76(43.58) & 376.82(184.27) & 1912.30(918.56) & 8276.98(3269.26) \\
		&FS & 17.52(27.99) & 241.06(51.79) & 4039.73(6955.91) & 43315.42(76563.64) \\[0.25cm]

		 0.05 & MM & 16.79(11.73) & 117.33(50.33) & 867.41(296.24) & 4083.60(1016.71) \\
		&aMM& 14.23(14.21) & 80.93(31.83) & 692.54(196.19) & 3841.10(948.55) \\
		&EM & 66.73(44.87) & 376.78(206.19) & 2291.69(1035.80) & 9054.02(3989.52) \\
		&FS & 13.33(18.33) & 253.10(61.72) & 3198.17(379.00) & 34057.87(7132.36) \\[0.25cm]

		 0.1 & MM & 17.01(8.97) & 122.03(55.77) & 733.37(329.08) & 3992.52(1166.67) \\
		&aMM& 12.39(8.86) & 88.34(33.87) & 593.27(174.93) & 3745.50(922.99) \\
		&EM & 76.65(53.37) & 389.45(179.41) & 1814.91(1152.64) & 7951.40(3810.34) \\
		&FS & 10.24(8.57) & 257.63(45.53) & 3140.15(481.08) & 33490.67(4533.51) \\[0.25cm]

		 1 & MM & 16.50(13.08) & 112.94(52.73) & 736.32(322.26) & 3746.87(1221.92) \\
		&aMM&  9.86(4.10) & 80.98(39.49) & 585.98(158.65) & 3536.90(754.48) \\
		&EM & 56.93(48.16) & 267.86(194.49) & 1465.45(986.31) & 7079.60(4430.10) \\
		&FS & 15.75(17.26) & 262.49(44.28) & 3003.68(481.92) & 33215.82(4801.38) \\[0.25cm]

		 10 & MM & 10.80(11.16) & 70.94(47.94) & 545.71(256.97) & 2316.96(1022.51) \\
		&aMM&  8.64(4.17) & 62.50(26.13) & 483.63(183.47) & 2317.43(1061.57) \\
		&EM & 21.51(31.61) & 113.76(158.82) & 803.36(816.05) & 3256.65(2624.29) \\
		&FS & 12.32(9.81) & 261.85(37.84) & 3190.52(394.75) & 26163.81(8451.96) \\[0.25cm]

		 20 & MM &  8.83(5.05) & 104.94(54.66) & 552.13(190.42) & 1706.71(680.84) \\
		&aMM&  9.57(9.80) & 92.94(35.84) & 524.70(137.22) & 1750.99(489.96) \\
		&EM & 23.13(31.17) & 175.12(198.18) & 642.39(576.82) & 2007.86(1901.66) \\
		&FS & 12.71(11.90) & 340.81(48.29) & 3543.18(464.36) & 18796.59(2445.74) \\[0.25cm]
\hline
	\end{tabular}
       	\caption{Average run times ($\times 10^{-3}$ seconds) of MM, quasi-Newton accelerated MM (aMM), EM, and Fisher scoring (FS) for fitting a two-way ANOVA model with $a = b = 5$ levels of both factors. Standard errors are given in parentheses.}
    \label{tab:run5}
\end{table}

\begin{table}
	\centering
	\begin{tabular}{crrrr}\\
    
    \hline
		
	$\sigma^2_a/\sigma^2_{e}$ & Method & Iteration & Runtime ($10^{-3}$ sec) & Objective   \\

\hline
		 0 & MM  & 88.10(29.01) & 778.24(305.37) & -374.35(9.82)  \\
		  & aMM  & 23.65(5.74) & 293.16(146.23) & -374.34(9.82)  \\
		  & EM  & 231.93(123.39) & 3509.02(1851.11) & -374.41(9.83)  \\
		  & FS  &  5.05(1.24) & 137.76(65.74) & -374.36(9.83) \\[0.25cm]
        
		 0.05  & MM  & 84.97(31.18) & 710.56(260.24) & -377.19(10.85)  \\
		  & aMM  & 23.05(5.45) & 272.04(67.01) & -377.18(10.85)  \\
		  & EM  & 220.57(124.70) & 3292.87(1865.91) & -377.25(10.85)  \\
		  & FS  &  5.08(1.21) & 136.47(33.18) & -377.21(10.83)\\[0.25cm]
		 0.1  & MM  & 82.45(34.39) & 673.96(268.23) & -379.62(10.54)  \\
		  & aMM  & 22.55(6.01) & 269.55(86.69) & -379.61(10.54)  \\
		  & EM  & 199.70(113.47) & 2917.71(1607.33) & -379.68(10.54)  \\
		  &FS  &  4.97(1.03) & 129.71(40.51) & -379.62(10.54)\\[0.25cm]
		 1 & MM  & 31.00(15.59) & 160.21(80.45) & -409.66(11.26)  \\
		  & aMM  & 12.55(5.38) & 90.21(43.54) & -409.66(11.26)  \\
		  & EM  & 51.10(28.70) & 550.55(321.89) & -409.67(11.26)  \\
		  & FS  &  4.60(0.59) & 80.28(25.56) & -409.66(11.26)\\[0.25cm]
         10 & MM  & 72.67(39.23) & 374.80(209.31) & -532.57(9.11)  \\
		  & aMM  & 20.15(10.18) & 146.25(81.06) & -531.24(9.28)  \\
		  & EM  & 294.20(717.05) & 3079.82(7520.30) & -532.71(9.11)  \\
		  & FS  & 10.18(4.92) & 168.63(80.34) & -532.08(9.21)  \\[0.25cm]
		20 & MM  & 78.35(34.32) & 425.40(188.08) & -591.36(7.05)  \\
		  & aMM  & 14.80(6.53) & 117.14(71.14) & -589.13(7.15)  \\
		  & EM  & 362.07(764.60) & 4144.92(8862.65) & -591.62(6.82)  \\
		  & FS  & 10.93(4.75) & 181.48(83.96) & -590.68(7.08)\\[0.25cm]
\hline
	\end{tabular}
	\caption{Average performance of MM, quasi-Newton accelerated MM (aMM), EM, and Fisher scoring (FS) for fitting a genetic model. Standard errors are given in parentheses.}
    \label{tab:mendel}
\end{table}

\section{Global convergence of the MM algorithm}

The KKT necessary conditions for a local maximum 
$\sigmabf^2=(\sigma_1^2, \ldots, \sigma_m^2)$ of the log-likelihood 
\eqref{eqn:vc-loglike} require each component of the score vector to satisfy  
\begin{eqnarray*}
	\frac{\partial}{\partial \sigma_i^2} L(\sigmabf^2) &  \in &  \begin{cases}
	 \{0\} & \sigma_i^2>0 \\
	(-\infty,0] & \sigma_i^2=0 \, .
	\end{cases}
\end{eqnarray*}
In this section we establish the global convergence of Algorithm \ref{algo:varcomp-univariate-y} to a KKT point. To reduce the notational burden, we assume that $\Xbf$ is null and omit estimation of fixed effects $\betabf$. The analysis easily extends to the MLE case. Our convergence analysis relies on characterizing the properties of the objective function $L(\sigmabf^2)$ and the MM algorithmic mapping $\sigmabf^2 \mapsto M(\sigmabf^2)$ defined by equation \eqref{eqn:MM-update-univariate}. Special attention must be paid to the boundary values $\sigma_i^2 = 0$. We prove convergences for two cases, which cover most applications.
\begin{assumption}
All $\Vbf_i$ are positive definite.
\end{assumption}
\begin{assumption}
$\Vbf_1$ is positive definite, each $\Vbf_i$ is nontrivial, ${\cal H} = \text{span} \{\Vbf_2, \ldots, \Vbf_m\}$ has dimension $q < n$, and $\ybf \notin {\cal H}$.
\end{assumption}
The genetic model in Section \ref{sec:univ-y} satisfies Assumption 1, while the two-way ANOVA model satisfies Assumption 2. The key condition $\ybf \notin \text{span}\{\Vbf_2, \ldots, \Vbf_m\}$ in the second case is critical for the existence of an MLE or an REML \citep{DemidenkoMassam99MVCMLEExist,GrzadzielMichalski14VCMLEExist}. We will derive a sequence of lemmas en route to the global convergence result declared in Theorem \ref{thm:convergence}. 

\begin{lemma}
\label{lemma:coerciveness}
Under Assumption 1 or 2, the log-likelihood function \eqref{eqn:vc-loglike} is coercive in the sense that the 
super-level set $S_c =\{\sigmabf^2 \ge \zerobf : L(\sigmabf^2) \ge c\}$ is compact for every $c$.
\end{lemma}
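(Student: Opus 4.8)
The plan is to show that $L(\sigmabf^2) \to -\infty$ whenever $\sigmabf^2$ either escapes to infinity or approaches the boundary on which $\Omegabf$ degenerates; compactness of every $S_c$ then follows, because $L$ is continuous on the open domain $\{\sigmabf^2 \ge \zerobf : \Omegabf \succ \zerobf\}$. Two elementary inequalities drive the whole argument. First, discarding the nonpositive quadratic contribution gives the crude bound $L(\sigmabf^2) \le -\frac12 \ln\det\Omegabf$, which forces $L \to -\infty$ along any direction in which $\det\Omegabf \to \infty$. Second, the variational identity $\ybf^T \Omegabf^{-1} \ybf = \max_{\wbf}(2\wbf^T\ybf - \wbf^T\Omegabf\wbf)$ lets me bound the quadratic term from below by restricting $\wbf$ to a convenient subspace; this is what controls the singular boundary.

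Under Assumption 1 I would use the single scale $t = \onebf^T\sigmabf^2 = \sum_i \sigma_i^2$. Since every $\Vbf_i \succeq c\Ibf$ for some $c > 0$, one has $\Omegabf \succeq ct\,\Ibf$, whence $\det\Omegabf \ge (ct)^n$ and $\lambda_{\max}(\Omegabf) \le \tr\Omegabf \le C't$. The first estimate gives $-\frac12\ln\det\Omegabf \le -\frac n2 \ln t + K$, while the second gives $\ybf^T\Omegabf^{-1}\ybf \ge \|\ybf\|^2/\lambda_{\max}(\Omegabf) \ge \|\ybf\|^2/(C't)$. Combining, $L(\sigmabf^2) \le -\frac n2 \ln t - \frac{\|\ybf\|^2}{2C't} + K$, and the right-hand side, a function of $t$ alone, diverges to $-\infty$ both as $t \to 0^+$ (the $1/t$ term beats the $\ln t$ term) and as $t \to \infty$. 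Hence $S_c$ lies in an annulus $\{t \in [\delta, R]\}$ with $\delta > 0$, where $\Omegabf \succeq c\delta\,\Ibf \succ \zerobf$; so $S_c$ stays inside the open domain, is bounded, and is closed, therefore compact. This step uses $\ybf \ne \zerobf$, which I would flag as a mild nondegeneracy condition holding almost surely for continuous responses (if $\ybf = \zerobf$, the level set accumulates at the origin and compactness genuinely fails).

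Under Assumption 2 the natural scale is $\sigma_1^2$ alone, since $\Vbf_1 \succ \zerobf$ is the only matrix guaranteeing positivity and $\Omegabf$ can degenerate precisely as $\sigma_1^2 \to 0$. From $\Omegabf \succeq \sigma_1^2\Vbf_1 \succeq \sigma_1^2 c\,\Ibf$ I again obtain $-\frac12\ln\det\Omegabf \le -\frac n2\ln\sigma_1^2 + K_1$. For the quadratic term I exploit $\ybf \notin {\cal H}$: writing $\ybf = \ybf_\parallel + \ybf_\perp$ with $\ybf_\perp \ne \zerobf$ the component orthogonal to ${\cal H}$ and restricting $\wbf$ to ${\cal H}^\perp$ in the variational identity, the facts $\Vbf_i\wbf = \zerobf$ for $i \ge 2$ and $\wbf^T\ybf = \wbf^T\ybf_\perp$ reduce the maximization to $\max_{\wbf \in {\cal H}^\perp}(2\wbf^T\ybf_\perp - \sigma_1^2 \wbf^T\Vbf_1\wbf) = c_0/\sigma_1^2$, where $c_0 = \ybf_\perp^T \Abf^{-1}\ybf_\perp > 0$ and $\Abf$ is the positive definite compression of $\Vbf_1$ to ${\cal H}^\perp$. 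This produces the same one-variable bound $L \le -\frac n2\ln\sigma_1^2 - \frac{c_0}{2\sigma_1^2} + K_1$, forcing $\sigma_1^2 \in [\underline\sigma, \overline\sigma]$ with $\underline\sigma > 0$ on $S_c$. With $\sigma_1^2$ now bounded below, I bound the remaining components simultaneously: $\Omegabf \succeq \underline\sigma\,\Vbf_1$ gives $\lambda_{\min}(\Omegabf) \ge \underline\sigma c$, so $\det\Omegabf \ge (\underline\sigma c)^{n-1}\tr\Omegabf/n$, and $L \le -\frac12\ln\det\Omegabf$ then bounds $\tr\Omegabf = \sum_j \sigma_j^2 \tr\Vbf_j$; since each nontrivial $\Vbf_j$ has $\tr\Vbf_j > 0$, every $\sigma_j^2$ is bounded. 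Thus $S_c$ is bounded and, staying in $\{\sigma_1^2 \ge \underline\sigma\}$, bounded away from the singular boundary, hence compact.

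The main obstacle is the singular boundary under Assumption 2. There both pieces of $L$ blow up as $\sigma_1^2 \to 0$ — the log-determinant to $+\infty$ like $\ln(1/\sigma_1^2)$ and the quadratic penalty to $-\infty$ like $1/\sigma_1^2$ — and the whole argument hinges on the quadratic term winning the race. Securing the sharp lower bound $c_0/\sigma_1^2$ is therefore the crux, and it is exactly here that the hypothesis $\ybf \notin {\cal H}$ is indispensable: if $\ybf \in {\cal H}$ then $\ybf_\perp = \zerobf$, the constant $c_0$ vanishes, the bound degenerates, and $L$ need not be coercive — matching the known failure of MLE/REML existence in that case.
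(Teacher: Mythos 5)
Your proof is correct, and while it follows the paper's overall strategy---handle Assumption 1 through a single radial scale, and handle Assumption 2 by showing the quadratic form beats the log-determinant as $\sigma_1^2 \to 0$ via the projection of $\ybf$ onto ${\cal H}^\perp$---several of your technical devices genuinely differ. Under Assumption 1 the paper passes to simplex coordinates $\alpha_i = \sigma_i^2/r$ and invokes continuity and boundedness on the unit simplex, whereas you use the cruder but equally effective eigenvalue bounds $\Omegabf \succeq ct\,\Ibf$ and $\lambda_{\max}(\Omegabf)\le \tr \Omegabf \le C't$. Under Assumption 2 the paper first reduces to $\Vbf_1 = \Ibf_n$ by conjugating with $\Vbf_1^{1/2}$ and then proves the block inequality $(\Ibf_n + \sum_{i\ge2}\alpha_i\Vbf_i)^{-1} \succeq \Ubf_{n-q}\Ubf_{n-q}^T$, while your variational identity $\ybf^T\Omegabf^{-1}\ybf = \max_{\wbf}(2\wbf^T\ybf - \wbf^T\Omegabf\wbf)$ restricted to ${\cal H}^\perp$ delivers the same lower bound $c_0/\sigma_1^2$ in one step, with no normalization of $\Vbf_1$. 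For the remaining components the paper sends each ratio $\alpha_j \to \infty$ and tracks a positive eigenvalue of $\Vbf_j$ inside $\ln\det(\Ibf_n + \alpha_j\Vbf_j)$; you instead bound $\tr\Omegabf$ through $\det\Omegabf \ge (\underline\sigma c)^{n-1}\tr\Omegabf/n$ and use $\tr\Vbf_j>0$ for each nontrivial $\Vbf_j$, which disposes of all components simultaneously. Finally, your explicit flag that $\ybf \ne \zerobf$ is needed under Assumption 1 (so that $L\to-\infty$ as $t\to0^+$, giving closedness of $S_c$ near the origin) addresses a boundary case the paper's written proof passes over silently---its Assumption 1 argument only treats $r\to\infty$---so your version is, if anything, slightly more complete on that point.
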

\begin{proof}
Let us first prove the assertion when all of the covariance matrices $\Vbf_i$ 
are positive definite. If we set $r = \|\sigmabf^2\|_1$ and $\alpha_i = r^{-1}\sigma_i^2$
for each $i$, then the log-likelihood satisfies
\begin{eqnarray*}
L(\sigmabf^2) & = & -\frac{n}{2} \ln r 
- \frac{1}{2} \ln \det \Big( \sum_{i=1}^m \alpha_i \Vbf_i \Big) 
-\frac{1}{2r}\ybf^T \Big(\sum_{i=1}^m \alpha_i \Vbf_i \Big)^{-1} \ybf.
\end{eqnarray*}
The functions $\ln \det \Big(\sum_{i=1}^m \alpha_i \Vbf_i \Big)$ and
$\ybf^T \Big(\sum_{i=1}^m \alpha_i \Vbf_i \Big)^{-1} \ybf$ of $\alphabf$ are
defined and continuous on the unit simplex and hence bounded there.
The dominant term $-\frac{n}{2} \ln r$ of the loglikelihood
tends to $-\infty$ as $r$ tends to $\infty$.

To prove the assertion under Assumption 2, consider first the case $\Vbf_1 = \Ibf_n$. Setting $\alpha_i = \sigma_i^2 / \sigma_1^2$ for $i=2,\ldots, m$ reduces the loglikelihood to
\begin{eqnarray}
	L(\sigma_1^2, \alphabf) & = & - \frac{n}{2} \ln \sigma_1^2 - \frac{1}{2} \ln \det \Big(\Ibf_n + \sum_{i=2}^m \alpha_i \Vbf_i \Big) - \frac{1}{2 \sigma_1^2} \ybf^T \Big( \Ibf_n + \sum_{i=2}^m \alpha_i \Vbf_i \Big)^{-1}\!\! \ybf. \label{eqn:alpha-param}
\end{eqnarray}
The middle term on the right satisfies 
\begin{eqnarray*}
- \frac{1}{2} \ln \det \Big(\Ibf_n + \sum_{i=2}^m \alpha_i \Vbf_i \Big) & \le & 0
\end{eqnarray*}
because $\det \,(\Ibf_n + \sum_{i=2}^m \alpha_i \Vbf_i ) \,\ge \, \det \Ibf_n =  1.$
Now let $\Ubf = (\Ubf_q, \Ubf_{n-q})$ be an $n \times n$ orthogonal matrix whose left columns 
$\Ubf_q$ span ${\cal H}$ and whose right columns $\Ubf_{n-q}$ span ${\cal H}^\perp$. The identity
\begin{eqnarray*}
	\Ubf^T \Big( \Ibf_n +  \sum_{i=2}^m \alpha_i \Vbf_i \Big) \Ubf &=& \begin{pmatrix}
	\Ibf_q + \sum_{i=2}^m \alpha_i \Ubf_q^T \Vbf_i \Ubf_q & \zerobf \\
	\zerobf & \Ibf_{n-q} \end{pmatrix}
\end{eqnarray*}
follows from the orthogonality relations $\Ubf_{n-q}^T \Vbf_i  \, = \,\Ubf_{n-q}^T \Ubf_q \,= \, \zerobf_{(n-q) \times n}$. This in turn implies  
\begin{eqnarray*}
 \Big( \Ibf_n +  \sum_{i=2}^m \alpha_i \Vbf_i \Big)^{-1} & = &  \Ubf \begin{pmatrix}
(\Ibf_q + \sum_{i=2}^m \alpha_i \Ubf_q^T \Vbf_i \Ubf_q)^{-1} & \zerobf \\
\zerobf & \Ibf_{n-q} \end{pmatrix} \Ubf^T \\
&\succeq & \Ubf \begin{pmatrix}
\zerobf & \zerobf \\
\zerobf & \Ibf_{n-q} \end{pmatrix} \Ubf^T \\
	& = & \Ubf_{n-q} \Ubf_{n-q}^T.
\end{eqnarray*}
Therefore the quadratic term in equation \eqref{eqn:alpha-param} is bounded below by the positive constant
\begin{eqnarray*}
\ybf^T \Big( \Ibf_n + \sum_{i=2}^m \alpha_i \Vbf_i \Big)^{-1} \!\! \ybf 
& \ge & \ybf^T \Ubf_{n-q} \Ubf_{n-q}^T \ybf \amp = \amp 
\|\mathbb{P}_{{\cal H}^\perp} \ybf \|^2 \amp > \amp 0.
\end{eqnarray*}
Here the assumption $\ybf \notin {\cal H}$ guarantees the projection property
$\mathbb{P}_{{\cal H}^\perp} \ybf  \ne \zerobf$.

Next we show that the loglikelihood tends to $-\infty$ when $\sigma_1^2$ tends to $0$ or $\infty$ or 
when $\|\alphabf\|_2$ tends to $\infty$.  The second of the two inequalities 
\begin{eqnarray*}
L(\sigma_0^2, \alphabf) & \le & - \frac{n}{2} \ln \sigma_1^2 - 
 \frac{1}{2} \ln \det \Big(\Ibf_n 
+ \sum_{i=2}^m \alpha_i \Vbf_i \Big) 
-\frac{1}{2 \sigma_1^2}\|\mathbb{P}_{{\cal H}^\perp} \ybf\|^2 \\
& \le & - \frac{n}{2} \ln \sigma_1^2 - 
\frac{1}{2 \sigma_1^2}\|\mathbb{P}_{{\cal H}^\perp} \ybf\|^2
\end{eqnarray*}
renders the claim about $\sigma_1^2$ obvious. To prove the claim about $\alphabf$, we 
make the worst case choice $\sigma_i^2 = \|\mathbb{P}_{{\cal H}^\perp} \ybf\|^2$
in the first inequality.  
It follows that  
\begin{eqnarray*}
L(\sigma_0^2, \alphabf) & \le & - \frac{1}{2} \ln \det \Big(\Ibf_n 
+ \sum_{i=2}^m \alpha_i \Vbf_i \Big)
- \frac{n}{2} \ln \|\mathbb{P}_{{\cal H}^\perp} \ybf\|^2 -\frac{n}{2}.
\end{eqnarray*}
If $\alpha_j$ tends to $\infty$, then the inequality
\begin{eqnarray*}
- \frac{1}{2} \ln \det \Big(\Ibf_n + \sum_{i=2}^m \alpha_i \Vbf_i \Big) & \le &
- \frac{1}{2} \ln \det \Big(\Ibf_n + \alpha_j \Vbf_j \Big) \amp = \amp
- \frac{1}{2} \sum_{k=1}^n \ln (1 + \alpha_j \lambda_{jk})  
\end{eqnarray*}
holds, where the $\lambda_{jk}$ are the eigenvalues of $\Vbf_j$.
At least one of these eigenvalues is positive because $\Vbf_j$ is nontrivial.
It follows that $L(\sigma_0^2, \alphabf)$ tends to $-\infty$ 
in this case as well.

For the general case where $\Vbf_1$ is non-singular but not necessarily $\Ibf_n$, let 
$\Vbf^{1/2}_1$ be the symmetric square root of $\Vbf_1$ and write
\begin{eqnarray*}
	\Vbf_1 + \sum_{i=2}^m \sigma_i^2 \Vbf_i = \Vbf^{1/2}_1 \left( \Ibf + \sum_{i=2}^m \sigma_i^2 \Vbf_1^{-1/2} \Vbf_i \Vbf_1^{-1/2} \right) \Vbf^{1/2}_1.
\end{eqnarray*}
The above arguments still apply since each $\Vbf_1^{-1/2} \Vbf_i \Vbf_1^{-1/2}$ is nontrivial
and $\ybf$ belongs to the $\text{span} \{\Vbf_2, \ldots, \Vbf_m\} = \Sbf$ if and only if $\Vbf_1^{-1/2}\ybf$ belongs to $\Vbf^{-1/2}_1\Sbf \Vbf_1^{-1/2}$.
\end{proof}

\begin{lemma}
\label{lemma:monotonicity-stationarity}
The iterates possess the ascent property $L(M(\sigmabf^{2(t)})) \ge L(\sigmabf^{2(t)})$. Furthermore,
when $L(M(\sigmabf_*^2)) = L(\sigmabf_*^2)$, $\sigmabf_*^2$ fulfills the
fixed point condition $M(\sigmabf_*^2) = \sigmabf_*^2$, and each component satisfies either (i) $\sigma_{*i}^2 = 0$ or (ii) $\sigma_{*i}^2 > 0$ and $\frac{\partial}{\partial \sigma_i^2} L(\sigmabf_*^2) = 0$.
\end{lemma}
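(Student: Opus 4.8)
The plan is to derive both assertions from the minorization structure of the surrogate $g(\sigmabf^2 \mid \sigmabf^{2(t)})$ in \eqref{g_def}. The ascent property is the generic MM sandwich \eqref{eqn:MM-monotonicity}: since $M(\sigmabf^{2(t)})$ maximizes $g(\cdot \mid \sigmabf^{2(t)})$ while $g$ minorizes $L$, I would write
\[
L(M(\sigmabf^{2(t)})) \;\ge\; g(M(\sigmabf^{2(t)}) \mid \sigmabf^{2(t)}) \;\ge\; g(\sigmabf^{2(t)} \mid \sigmabf^{2(t)}) \;=\; L(\sigmabf^{2(t)}).
\]
The only point needing verification is the tangency $g(\sigmabf^{2(t)} \mid \sigmabf^{2(t)}) = L(\sigmabf^{2(t)})$, which holds because the two matrix minorizations \eqref{eqn:matrix-ineq} and \eqref{eqn:logdet-majorization} each collapse to equalities when $\Omegabf = \Omegabf^{(t)}$.

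For the characterization, I would start from $L(M(\sigmabf_*^2)) = L(\sigmabf_*^2)$ and force every inequality in the above chain (now anchored at $\sigmabf_*^2$) to be an equality, obtaining $g(M(\sigmabf_*^2) \mid \sigmabf_*^2) = g(\sigmabf_*^2 \mid \sigmabf_*^2)$; that is, $\sigmabf_*^2$ already attains the maximal value of $g(\cdot \mid \sigmabf_*^2)$. The crux is then to show this maximizer is unique, so $\sigmabf_*^2 = M(\sigmabf_*^2)$. I would exploit the separability of \eqref{g_def}: with $\Xbf$ null, abbreviating $a_i = \tr(\Omegabf_*^{-1}\Vbf_i)$ and $b_i = \ybf^T \Omegabf_*^{-1}\Vbf_i\Omegabf_*^{-1}\ybf \ge 0$, the $i$-th coordinate contributes $h_i(\sigma_i^2) = -\tfrac{1}{2} a_i \sigma_i^2 - \tfrac{1}{2} b_i \sigma_{*i}^{4}/\sigma_i^2$. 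Schur's lemma, through Proposition \ref{prop:nonzero-denom}, supplies $a_i > 0$; hence when $b_i\sigma_{*i}^4 > 0$ the function $h_i$ is strictly concave on $(0,\infty)$ with the multiplicative update \eqref{eqn:MM-update-univariate} as its unique maximizer, whereas when $b_i\sigma_{*i}^4 = 0$ it is strictly decreasing with unique maximizer $0$. Because $g$ is a sum of such terms and each $h_i(\sigma_{*i}^2)$ cannot exceed $\max h_i$, equality of the attained and maximal values propagates coordinatewise, and uniqueness then yields $\sigma_{*i}^2 = M_i(\sigmabf_*^2)$ for every $i$, the desired fixed point identity.

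The component conditions follow directly. For any coordinate with $\sigma_{*i}^2 > 0$, the fixed point equation $\sigma_{*i}^2 = \sigma_{*i}^2 \sqrt{b_i/a_i}$ forces $b_i = a_i$, and comparison with the gradient formula \eqref{eqn:L-gradient} gives $\frac{\partial}{\partial \sigma_i^2} L(\sigmabf_*^2) = \tfrac{1}{2}(b_i - a_i) = 0$, which is case (ii); the remaining coordinates satisfy $\sigma_{*i}^2 = 0$, which is case (i).

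I expect the uniqueness argument to be the main obstacle, in particular the bookkeeping at the boundary $\sigma_{*i}^2 = 0$, where the $i$-th coordinate of $g$ degenerates to the linear map $-\tfrac{1}{2} a_i\sigma_i^2$ with maximizer $0$ over $\sigma_i^2 \ge 0$. This is precisely why the convention $\sigma_i^{2(t+1)} = 0$ whenever $\sigma_i^{2(t)} = 0$ keeps the algorithm map consistent with maximizing $g$ and preserves the fixed point; handling interior and boundary coordinates uniformly, with each $a_i > 0$ legitimately supplied by Schur's lemma, is where the argument demands the most care.
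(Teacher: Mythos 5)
Your proposal is correct and follows essentially the same route as the paper's proof: the MM sandwich \eqref{eqn:MM-monotonicity} for ascent, forcing equality through the chain to get $g(M(\sigmabf_*^2) \mid \sigmabf_*^2) = g(\sigmabf_*^2 \mid \sigmabf_*^2)$, concluding the fixed point identity from uniqueness of the surrogate maximizer, and then deducing stationarity for the positive coordinates. The only differences are matters of detail, both to your credit: you make explicit the coordinatewise strict-concavity argument for uniqueness of the maximizer of \eqref{g_def} (which the paper compresses into a bare ``hence''), and you reach $\frac{\partial}{\partial \sigma_i^2} L(\sigmabf_*^2) = 0$ by the direct computation $b_i = a_i$ from the fixed point equation \eqref{eqn:MM-update-univariate} and the gradient formula \eqref{eqn:L-gradient}, where the paper instead invokes the general MM fact that $L - g(\cdot \mid \sigmabf_*^2)$ is minimized at $\sigmabf_*^2$ so their gradients agree there.
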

\begin{proof}
The ascent property is built into any MM algorithm. Suppose $L(M(\sigmabf_*^2)) = L(\sigmabf_*^2)$ at a point $\sigmabf_*^2 \in \mathbb{R}_+^m$. Then equality must hold in the string of inequalities
\eqref{eqn:MM-monotonicity}. It follows that 
\begin{eqnarray*}
g(M(\sigmabf_*^2) \mid \sigmabf_*^2) & = & g(\sigmabf_*^2 \mid \sigmabf_*^2)
\end{eqnarray*}
and hence that $M(\sigmabf_*^2) = \sigmabf_*^2$. If $\sigma_{*i}^2 > 0$,
the stationarity condition
\begin{eqnarray*}
\frac{\partial}{\partial \sigma_i^2} L(\sigmabf_*^2) & = &
\frac{\partial}{\partial \sigma_i^2} g(\sigmabf_*^2 \mid \sigmabf_*^2) \amp = \amp 0 
\end{eqnarray*}
applies. The equivalence of the two displayed partial derivatives is a consequence of
the fact that the difference $f(\sigmabf^2)-g(\sigmabf^2 \mid \sigmabf_*^2)$ achieves
its minimum of 0 at $\sigmabf^2 = \sigmabf_*^2$.
\end{proof}

\begin{lemma}
The distance between successive iterates $\|\sigmabf^{2(t+1)} - \sigmabf^{2(t)}\|_2$ converges to 0.
\label{lemma:successive_iterates}
\end{lemma}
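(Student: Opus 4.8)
The plan is to convert the vanishing of successive objective increments into a bound on the step length, exploiting the fact that the surrogate \eqref{g_def} separates exactly into one-dimensional pieces whose maximizers have a closed form. First I would note that the iterates never leave the super-level set $S_{c_0}$ with $c_0 = L(\sigmabf^{2(0)})$: by the ascent property of Lemma \ref{lemma:monotonicity-stationarity} the sequence $L(\sigmabf^{2(t)})$ is nondecreasing, so $L(\sigmabf^{2(t)}) \ge c_0$ for all $t$. Lemma \ref{lemma:coerciveness} makes $S_{c_0}$ compact, and since $L$ is continuous it is bounded above there; hence the monotone sequence $L(\sigmabf^{2(t)})$ converges and the increments $L(\sigmabf^{2(t+1)}) - L(\sigmabf^{2(t)})$ tend to $0$.

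Next I would use the MM sandwich \eqref{eqn:MM-monotonicity} to squeeze the surrogate gain. Taking $\Xbf$ null and writing $a_i^{(t)} = \frac{1}{2}\tr(\Omegabf^{-(t)}\Vbf_i)$ and $b_i^{(t)} = \frac{1}{2}\ybf^T\Omegabf^{-(t)}\Vbf_i\Omegabf^{-(t)}\ybf$, the separable form \eqref{g_def} reads $g(\sigmabf^2 \mid \sigmabf^{2(t)}) = \sum_i\big(-a_i^{(t)}\sigma_i^2 - b_i^{(t)}\sigma_i^{4(t)}/\sigma_i^2\big) + c^{(t)}$. A direct evaluation at the closed-form maximizer $\sigma_i^{2(t+1)} = \sigma_i^{2(t)}\sqrt{b_i^{(t)}/a_i^{(t)}}$ gives the per-component gain
\begin{eqnarray*}
 g(\sigmabf^{2(t+1)} \mid \sigmabf^{2(t)}) - g(\sigmabf^{2(t)} \mid \sigmabf^{2(t)}) & = & \sum_{i=1}^m \sigma_i^{2(t)} \big(\sqrt{a_i^{(t)}} - \sqrt{b_i^{(t)}}\big)^2 \; \ge \; 0.
\end{eqnarray*}
Since \eqref{eqn:MM-monotonicity} sandwiches this between $0$ and $L(\sigmabf^{2(t+1)}) - L(\sigmabf^{2(t)})$, it too tends to $0$, and as every summand is nonnegative each summand tends to $0$ individually.

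The crux is an exact identity relating this gain to the squared step. Substituting the maximizer into the one-dimensional concave map $s \mapsto -a_i^{(t)} s - b_i^{(t)}\sigma_i^{4(t)}/s$ (and using the convention $\sigma_i^{2(t+1)} = 0$ whenever $\sigma_i^{2(t)} = 0$), a one-line computation yields
\begin{eqnarray*}
 \big(\sigma_i^{2(t+1)} - \sigma_i^{2(t)}\big)^2 & = & \frac{\sigma_i^{2(t)}}{a_i^{(t)}} \, \sigma_i^{2(t)} \big(\sqrt{a_i^{(t)}} - \sqrt{b_i^{(t)}}\big)^2 ,
\end{eqnarray*}
which holds in every case, including $b_i^{(t)} = 0$ (a step to the boundary) and $\sigma_i^{2(t)} = 0$ (no step). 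Summing over $i$ thus expresses $\|\sigmabf^{2(t+1)} - \sigmabf^{2(t)}\|_2^2$ as a weighted sum of the per-component gains already shown to vanish.

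It remains to bound the weights $\sigma_i^{2(t)}/a_i^{(t)}$ uniformly in $t$, which I expect to be the main obstacle. Here I would invoke compactness once more: on $S_{c_0}$ the continuous map $\sigmabf^2 \mapsto \Omegabf$ has compact image, and that image lies entirely inside the positive-definite cone, since $\Omegabf$ degenerating would send $L$ to $-\infty$ by the coerciveness estimates of Lemma \ref{lemma:coerciveness}, contradicting $L \ge c_0$. Hence $\delta\Ibf \preceq \Omegabf^{(t)} \preceq \Delta\Ibf$ for fixed $0 < \delta \le \Delta$, which forces $a_i^{(t)} = \frac{1}{2}\tr(\Omegabf^{-(t)}\Vbf_i) \ge a_{\min} > 0$ (each $\Vbf_i$ being nonzero and $\Omegabf^{-(t)}$ positive definite) and $\sigma_i^{2(t)} \le R$ for some finite $R$. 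Therefore
\begin{eqnarray*}
 \|\sigmabf^{2(t+1)} - \sigmabf^{2(t)}\|_2^2 & \le & \frac{R}{a_{\min}} \sum_{i=1}^m \sigma_i^{2(t)} \big(\sqrt{a_i^{(t)}} - \sqrt{b_i^{(t)}}\big)^2 \; \longrightarrow \; 0,
\end{eqnarray*}
which is the claim. Everything outside this uniform positive-definiteness step is the algebra of the sandwich together with the one-line maximizer identity.
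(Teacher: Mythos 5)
Your proof is correct, but it takes a genuinely different route from the paper's. The paper argues by contradiction with subsequences: it extracts $t_k$ with $\|\sigmabf^{2(t_k+1)}-\sigmabf^{2(t_k)}\|_2 \ge \epsilon$, uses compactness of the super-level set from Lemma \ref{lemma:coerciveness} to pass to limits $\sigmabf^{2(t_k)} \to \sigmabf_*^2$ and $\sigmabf^{2(t_k+1)} \to \sigmabf_{**}^2$, invokes continuity of the algorithm map $M$ to conclude $\sigmabf_{**}^2 = M(\sigmabf_*^2)$, and then applies Lemma \ref{lemma:monotonicity-stationarity} (equality $L(M(\sigmabf_*^2)) = L(\sigmabf_*^2)$ forces the fixed-point condition $M(\sigmabf_*^2) = \sigmabf_*^2$) to contradict the $\epsilon$-separation. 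You instead give a direct, quantitative sufficient-decrease argument, and your algebra checks out: with your $a_i^{(t)}, b_i^{(t)}$, the per-component surrogate gain is indeed $\sigma_i^{2(t)}\big(\sqrt{a_i^{(t)}}-\sqrt{b_i^{(t)}}\big)^2$, and the identity $\big(\sigma_i^{2(t+1)}-\sigma_i^{2(t)}\big)^2 = \big(\sigma_i^{2(t)}/a_i^{(t)}\big)\,\sigma_i^{2(t)}\big(\sqrt{a_i^{(t)}}-\sqrt{b_i^{(t)}}\big)^2$ holds in all cases, including $b_i^{(t)}=0$ and $\sigma_i^{2(t)}=0$. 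Your uniform bounds are also sound, with one small simplification available: only the upper spectral bound is actually needed, since compactness of $S_{c_0}$ gives $\sigma_i^{2(t)} \le R$ and $\Omegabf^{(t)} \preceq \Delta \Ibf$, whence $a_i^{(t)} \ge \tr(\Vbf_i)/(2\Delta) > 0$ because each $\Vbf_i$ is a nonzero positive semidefinite matrix; the lower bound $\delta \Ibf \preceq \Omegabf^{(t)}$, though true on $S_{c_0}$ by continuity of the smallest eigenvalue, plays no role in the final estimate. As for what each approach buys: yours dispenses with continuity of $M$ altogether (a point the paper asserts but which is slightly delicate under the boundary convention $\sigma_i^{2(t+1)}=0$ when $\sigma_i^{2(t)}=0$), and since the surrogate gains telescope against $\sup_{S_{c_0}} L - L(\sigmabf^{2(0)})$, it proves the strictly stronger conclusion $\sum_t \|\sigmabf^{2(t+1)}-\sigmabf^{2(t)}\|_2^2 < \infty$ rather than mere convergence of the steps to zero. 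The paper's subsequence argument, on the other hand, never uses the closed form of the update: it works for any continuous algorithm map enjoying the ascent property and the fixed-point characterization of Lemma \ref{lemma:monotonicity-stationarity}, so it transfers verbatim to the multivariate and LMM versions of the algorithm, where your exact one-dimensional maximizer identity is no longer available.
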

\begin{proof} Suppose on the contrary that $\|\sigmabf^{2(t+1)} - \sigmabf^{2(t)}\|_2$ does not converge to 0. Then one can extract a subsequence $\{t_k\}_{k \ge 1}$ such that 
\begin{eqnarray}
\|\sigmabf^{2(t_k+1)} - \sigmabf^{2(t_k)}\|_2 \ge \epsilon > 0 \label{eqn:successive-diff}
\end{eqnarray}
for all $k$. Let $C_0$ be the compact super-level set 
$\{\sigmabf^2: L(\sigmabf^2) \ge L(\sigmabf^{2(0)})\}$.
Since the sequence $\{\sigmabf^{2(t_k)}\}_{k \ge 1}$ is confined to $C_0$, one can pass to a subsequence if necessary and assume that $\sigmabf^{2(t_k)}$ converges to a limit 
$\sigmabf_{*}^2$ and that $\sigmabf^{2(t_k+1)}$ converges to a limit $\sigmabf_{**}^2$.
Taking limits in the relation $\sigmabf^{2(t_k+1)}=M(\sigmabf^{2(t_k)})$ and invoking the continuity $M(\sigmabf^2)$ imply that $\sigmabf_{**}^2=M(\sigmabf_{*}^2)$. 
Because the sequence $L(\sigmabf^{2(t_k)})$ is monotonically increasing in $k$ and bounded above 
on $C_0$, it converges to a limit $L_*$. Hence, the continuity of $L(\sigmabf^2)$ implies
\begin{eqnarray*}
L(\sigmabf_*^2) & = & \amp \lim_k L(\sigmabf^{2(t_k)}) 
\amp = \amp L_* \amp = \amp \lim_k L(\sigmabf^{2(t_k+1)}) \amp = \amp L(\sigmabf_{**}^2)
\amp = \amp L(M(\sigmabf_*^2)). 
\end{eqnarray*}
Lemma \ref{lemma:monotonicity-stationarity} therefore gives 
$\sigmabf_{**}^2 = M(\sigmabf_{*}^2) = \sigmabf_*^2$, contradicting
the bound $\|\sigmabf_*^2 - \sigmabf_{**}^2\|_2 \ge \epsilon$ entailed by 
inequality \eqref{eqn:successive-diff}.
\end{proof}

\begin{theorem}
\label{thm:convergence}
The MM sequence $\{\sigmabf^{2(t)}\}_{t \ge 0}$ has at least one limit point.  
Every limit point is a fixed point of $M(\sigmabf^2)$. If the set of fixed points 
is discrete, then the MM sequence converges to one of them. Finally, when the
iterates converge, their limit is a KKT point. 
\end{theorem}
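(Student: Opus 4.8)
The plan is to establish the four assertions in sequence, leaning on the three preceding lemmas, and to reserve special care for the boundary components $\sigma_{*i}^2 = 0$. For the \emph{existence of a limit point} and the claim that every such point is a \emph{fixed point}, I would first note that the ascent property of Lemma~\ref{lemma:monotonicity-stationarity} traps every iterate in the super-level set $C_0 = \{\sigmabf^2 \ge \zerobf : L(\sigmabf^2) \ge L(\sigmabf^{2(0)})\}$, which Lemma~\ref{lemma:coerciveness} guarantees is compact; Bolzano--Weierstrass then yields a convergent subsequence $\sigmabf^{2(t_k)} \to \sigmabf_*^2$. Since $L(\sigmabf^{2(t)})$ is nondecreasing and bounded above on $C_0$, it converges to some $L_*$, so $L(\sigmabf_*^2) = L_*$ by continuity. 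Applying the continuous map $M$ along the subsequence gives $\sigmabf^{2(t_k+1)} \to M(\sigmabf_*^2)$ with $L(\sigmabf^{2(t_k+1)}) \to L_*$ as well, whence $L(M(\sigmabf_*^2)) = L(\sigmabf_*^2)$; Lemma~\ref{lemma:monotonicity-stationarity} then forces $M(\sigmabf_*^2) = \sigmabf_*^2$. This reuses the argument already rehearsed inside the proof of Lemma~\ref{lemma:successive_iterates}.

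The \emph{convergence under discreteness} is the classical topological step. Lemma~\ref{lemma:successive_iterates} supplies $\|\sigmabf^{2(t+1)} - \sigmabf^{2(t)}\|_2 \to 0$ while the iterates stay in the compact set $C_0$. I would invoke the standard fact that a bounded sequence with vanishing successive differences has a \emph{connected} set of limit points. Because every limit point is a fixed point and the fixed-point set is discrete by hypothesis, this connected set sits inside a discrete set and so collapses to a single point, and a bounded sequence with a unique limit point converges to it.

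The \emph{KKT conclusion} is where the boundary demands the most attention. Assuming $\sigmabf^{2(t)} \to \sigmabf_*^2$, membership in $C_0$ keeps $\Omegabf_* = \sum_j \sigma_{*j}^2 \Vbf_j$ positive definite, so the gradient formula~\eqref{eqn:L-gradient} is valid at $\sigmabf_*^2$. Components with $\sigma_{*i}^2 > 0$ inherit the stationarity $\partial L / \partial \sigma_i^2 = 0$ directly from Lemma~\ref{lemma:monotonicity-stationarity}. For a component with $\sigma_{*i}^2 = 0$, I would write the update~\eqref{eqn:MM-update-univariate} as $\sigma_i^{2(t+1)} = \sigma_i^{2(t)}\sqrt{N_i^{(t)}/D_i^{(t)}}$, with $N_i^{(t)}$ and $D_i^{(t)}$ the numerator and denominator under the radical, and argue that the limiting multiplier cannot exceed one: if $\sqrt{N_i(\sigmabf_*^2)/D_i(\sigmabf_*^2)} > 1$, then the ratio $\sigma_i^{2(t+1)}/\sigma_i^{2(t)}$ would eventually be bounded away from one from above, contradicting $\sigma_i^{2(t)} \to 0$. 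Hence $N_i(\sigmabf_*^2) \le D_i(\sigmabf_*^2)$, and since \eqref{eqn:L-gradient} makes $\partial L / \partial \sigma_i^2$ equal to $\tfrac12(N_i - D_i)$, the one-sided condition $\partial L / \partial \sigma_i^2(\sigmabf_*^2) \le 0$ follows, completing the KKT verification.

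I anticipate the two genuine difficulties to be citing or reproving the connectedness of the limit-point set cleanly in the third part, and, in the fourth part, justifying that the limiting update multiplier is at most one at a vanishing component---the step that converts the multiplicative recursion into the correct one-sided KKT inequality at the boundary.
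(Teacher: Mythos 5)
Your proposal is correct and follows essentially the same route as the paper: compactness of the super-level set $C_0$ plus the ascent property for limit points, the argument of Lemma~\ref{lemma:successive_iterates} combined with Lemma~\ref{lemma:monotonicity-stationarity} to show limit points are fixed points, the Ostrowski-type connectedness fact (which the paper cites as Ostrowski's theorem) for convergence under discreteness, and a contradiction via the multiplicative update for the boundary KKT condition. Your phrasing of the last step through the limiting ratio $N_i(\sigmabf_*^2)/D_i(\sigmabf_*^2) > 1$ is just an equivalent reformulation of the paper's argument via the sign of $\frac{\partial}{\partial \sigma_i^2} L(\sigmabf^{2(t)})$ along the tail of the iterates, since equation~\eqref{eqn:L-gradient} identifies that sign with the comparison of $N_i$ and $D_i$.
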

\begin{proof}
The sequence $\{\sigmabf^{2(t)}\}_{t \ge 0}$ is contained in the super-level
compact set $C_0$ defined in Lemma \ref{lemma:successive_iterates} and therefore admits a convergent 
subsequence $\sigmabf^{2(t_k)}$ with limit $\sigmabf^{2(\infty)}$. As argued in Lemma \ref{lemma:successive_iterates}, $L(\sigmabf^{2(\infty)}) = L(M(\sigmabf^{2(\infty)}))$. Lemma
\ref{lemma:monotonicity-stationarity} now implies that $\sigmabf^{2(\infty)}$ is a fixed point of 
the algorithm map $M(\sigmabf^2)$.

According to Ostrowski's theorem \citep[Proposition 8.2.1]{Lange10NumAnalBook}, the set of limit points of a bounded sequence $\{\sigmabf^{2(t)}\}_{t \ge 0}$ is connected and compact provided $\|\sigmabf^{2(t+1)} - \sigmabf^{2(t)}\|_2 \to 0$. If the set of fixed points is discrete,
then the connected subset of limit points reduces to a single point. Hence, the bounded
sequence $\sigmabf^{2(t)}$ converges to this point. When the limit exists, one can check
that $\sigmabf^{2(\infty)}$ satisfies the KKT conditions by proving that each zero component 
of $\sigmabf^{2(\infty)}$ has a non-positive partial derivative. Suppose on the contrary $\sigma_i^{2(\infty)}=0$ and $\frac{\partial}{\partial \sigma_i^2} L(\sigmabf^{2(\infty)}) > 0$. By continuity $\frac{\partial}{\partial \sigma_i^2} L(\sigmabf^{2(t)})  > 0$ for all large $t$. Therefore, $\sigma_i^{2(t+1)} > \sigma_i^{2(t)}$ for all large $t$ by the observation made after equation \eqref{eqn:L-gradient}. This behavior is inconsistent with the assumption that $\sigma_i^{2(t)} \to 0$.
\end{proof}

\section{MM versus EM}

Examination of Tables \ref{tab:run5} and \ref{tab:mendel} suggests that the MM algorithm usually converges faster than the EM algorithm. We now provide theoretical justification for this observation. Again for notational convenience, we consider the REML case where $\Xbf$ is null.
Since the EM principle is just a special instance of the MM principle, we can compare their convergence properties in a unified framework. Consider an MM map $M(\thetabf)$ for maximizing the objective function $f(\thetabf)$ via the surrogate function $g(\thetabf \mid \thetabf^{(t)})$. Close to the optimal point 
$\thetabf^{\infty}$,
\begin{eqnarray*}
    \thetabf^{(t+1)} - \thetabf^{\infty} & \approx & dM(\thetabf^{\infty}) (\thetabf^{(t)} - \thetabf^{\infty}),
\end{eqnarray*}
where $dM(\thetabf^{\infty})$ is the differential of the mapping $M$ at the optimal point $\thetabf^{\infty}$ of $f(\thetabf)$. Hence, the local convergence rate of the sequence $\thetabf^{(t+1)} = M(\thetabf^{(t)})$ coincides with the spectral radius of $dM(\thetabf^{\infty})$. Familiar calculations \citep{McLachlan08EMBook,Lange10NumAnalBook} demonstrate that
\begin{eqnarray*}
    dM(\thetabf^\infty) & = & \Ibf - [d^{2}g(\thetabf^\infty \mid \thetabf^\infty)]^{-1} d^2 f(\thetabf^\infty).  \label{eqn:MM-map-diff}
\end{eqnarray*}
In other words, the local convergence rate is determined by how well the surrogate surface 
$g(\thetabf \mid \thetabf^{\infty})$ approximates the objective surface $f(\thetabf)$ 
near the optimal point $\thetabf^\infty$. In the EM literature, $dM(\thetabf^\infty)$ is called the \emph{rate matrix} \citep{MengRubin91SEM}. Fast convergence occurs when the surrogate $g(\thetabf \mid \thetabf^{\infty})$ hugs the objective $f(\thetabf)$ tightly around $\thetabf^{\infty}$. Figure \ref{fig:MM-vs-EM} shows a case where the MM surrogate locally dominates the EM surrogate. We demonstrate that this is no accident.

\citet{McLachlan08EMBook} derive the EM surrogate 
\begin{eqnarray*}
g_{\text{EM}}(\sigmabf^{2} \!\mid \! \sigmabf^{2(t)})
&=& -\frac{1}{2} \sum_{i=1}^{m}\Big[\rank(\Vbf_i) \ln \sigma_{i}^2 + \rank(\Vbf_i) \frac{\sigma_{i}^{2(t)}}{\sigma_{i}^2}
- \frac{\sigma_{i}^{4(t)}}{\sigma_{i}^2}\tr(\Omegabf^{-(t)}\Vbf_{i})\Big] \\
&  & - \frac{1}{2} \sum_{i=1}^m \frac{\sigma_{i}^{4(t)}}{\sigma^2_{i}}\ybf ^T \Omegabf^{-(t)} \Vbf_i \Omegabf^{-(t)} \ybf
\end{eqnarray*}
minorizing the log-likelihood up to an irrelevant constant. Section S.3 
of the Supplementary Materials gives a detailed derivation for the more general multivariate response case. The rank of the covariance matrix $\Vbf_i$ appears because $\Vbf_i$ may not be invertible. Both of the surrogates $g_{\text{EM}}(\sigmabf^{2} \!\mid \!\sigmabf^{2(\infty)})$ and $g_{\text{MM}}(\sigmabf^{2} \!\mid \!\sigmabf^{2(\infty)})$ are parameter separated. This implies that both second differentials 
$d^{2}g_{\text{EM}}(\sigmabf^{2(\infty)} \!\mid \!\sigmabf^{2(\infty)})$ and $d^{2}g_{\text{MM}}(\sigmabf^{2(\infty)} \!\mid \!\sigmabf^{2(\infty)})$ are diagonal. A small diagonal entry of either matrix indicates fast convergence of the corresponding variance component. Our next result shows that, under Assumption 1, on average the diagonal entries of $d^{2}g_{\text{EM}}(\sigmabf^{2(\infty)} \!\mid \!\sigmabf^{2(\infty)})$ dominate those of $d^{2}g_{\text{MM}}(\sigmabf^{2(\infty)} \!\mid \!\sigmabf^{2(\infty)})$ when $m>2$. Thus, the EM algorithm tends to converge more slowly than the MM algorithm, and the difference is more pronounced as the number of variance components $m$ grows. 
\begin{theorem}
\label{thm:local-conv-rate}
Let $\sigmabf^{2(\infty)} \succ \zerobf_m$ be a common limit point of the EM and MM algorithms. Then 
both second differentials $d^{2}g_{\text{MM}}(\sigmabf^{2(\infty)}\!\mid \!\sigmabf^{2(\infty)})$ and $d^{2}g_{\text{EM}}(\sigmabf^{2(\infty)}\! \mid \!\sigmabf^{2(\infty)})$ are diagonal with 
\begin{eqnarray*}
d^{2}g_{\text{EM}}(\sigmabf^{2(\infty)} \mid  \sigmabf^{2(\infty)})_{ii}
& = & -\frac{\rank(\Vbf_i)}{2 \sigma_{i}^{4(\infty)}} \\
d^{2}g_{\text{MM}}(\sigmabf^{2(\infty)} \mid  \sigmabf^{2(\infty)})_{ii}
& = & -\frac{\ybf^T \Omegabf^{-(\infty)} \Vbf_i \Omegabf^{-(\infty)} \ybf}{\sigma_i^{2(\infty)}} 
\amp = \amp -\frac{\tr(\Omegabf^{-(\infty)}\Vbf_i)}{\sigma_i^{2(\infty)}}.
\end{eqnarray*}
Furthermore, the average ratio
\begin{eqnarray*}
\frac{1}{m} \sum_{i=1}^m \frac{d^{2}g_{\text{MM}}(\sigmabf^{2(\infty)} \mid \sigmabf^{2(\infty)})_{ii}}{d^{2}g_{\text{EM}}(\sigmabf^{2(\infty)} \mid  \sigmabf^{2(\infty)})_{ii}} & = &  \frac{2}{mn} \sum_{i=1}^m \tr(\Omegabf^{-(\infty)}\sigma_i^{2(\infty)}\Vbf_i) 
\amp = \amp \frac{2}{m} \amp < \amp 1
\end{eqnarray*}
for $m>2$ when all $\Vbf_i$ have full rank $n$.
\end{theorem}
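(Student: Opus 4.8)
The plan is to exploit the parameter separation of both surrogates noted just before the theorem. Each of $g_{\text{MM}}(\sigmabf^2 \mid \sigmabf^{2(\infty)})$ and $g_{\text{EM}}(\sigmabf^2 \mid \sigmabf^{2(\infty)})$ is a sum $\sum_{i=1}^m \phi_i(\sigma_i^2)$ whose $i$th term involves only $\sigma_i^2$, so every mixed second partial derivative vanishes and both Hessians are diagonal with no further work. It then suffices to differentiate each summand twice in its own variable $s = \sigma_i^2$ and evaluate at the common limit point, where the conditioning iterate is itself $\sigmabf^{2(\infty)}$. Throughout I abbreviate $a_i = \tr(\Omegabf^{-(\infty)} \Vbf_i)$ and $b_i = \ybf^T \Omegabf^{-(\infty)} \Vbf_i \Omegabf^{-(\infty)} \ybf$, the two constants that survive differentiation.

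For the MM surrogate, the $i$th summand of \eqref{g_def} with $\Xbf$ null is $-\tfrac{s}{2} a_i - \tfrac{1}{2} \sigma_i^{4(\infty)} b_i / s$; two derivatives give $-\sigma_i^{4(\infty)} b_i / s^3$, equal to $-b_i / \sigma_i^{2(\infty)}$ at $s = \sigma_i^{2(\infty)}$. For the EM surrogate I would first merge its three reciprocal-in-$s$ contributions into a single constant $K = -\tfrac12 r_i \sigma_i^{2(\infty)} + \tfrac12 \sigma_i^{4(\infty)}(a_i - b_i)$, where $r_i = \rank(\Vbf_i)$, so that the summand reads $-\tfrac{r_i}{2}\ln s + K/s$ with second derivative $r_i/(2s^2) + 2K/s^3$. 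The linchpin is the stationarity relation: because $\sigmabf^{2(\infty)} \succ \zerobf_m$ is a KKT point by Theorem \ref{thm:convergence}, the gradient formula \eqref{eqn:L-gradient} forces $a_i = b_i$ for every $i$. This collapses $K$ to $-\tfrac12 r_i \sigma_i^{2(\infty)}$, and substituting $s = \sigma_i^{2(\infty)}$ yields the claimed entries $-r_i/(2\sigma_i^{4(\infty)})$ for EM and $-b_i/\sigma_i^{2(\infty)} = -a_i/\sigma_i^{2(\infty)}$ for MM; the same identity $b_i = a_i$ supplies the equality between the two stated forms of the MM entry.

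Finally I would form the pointwise ratio $2 \sigma_i^{2(\infty)} a_i / r_i$ and average over $i$. Under the full-rank hypothesis each $r_i = n$, so the average equals $\tfrac{2}{mn}\sum_{i=1}^m \tr(\Omegabf^{-(\infty)} \sigma_i^{2(\infty)} \Vbf_i)$. The sum telescopes by linearity of the trace and the definition $\Omegabf^{(\infty)} = \sum_i \sigma_i^{2(\infty)} \Vbf_i$, giving $\tr(\Omegabf^{-(\infty)} \Omegabf^{(\infty)}) = \tr \Ibf_n = n$, so the average reduces to $2/m$, strictly below $1$ exactly when $m > 2$. The differentiation is routine; the one point demanding care is recognizing that both the simplification of $K$ and the coincidence of the two MM forms rest entirely on the single stationarity identity $a_i = b_i$, without which neither clean expression would emerge.
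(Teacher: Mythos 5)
Your proposal is correct and follows essentially the same route as the paper's proof (given in Section S.3 of the Supplementary Materials): diagonality from parameter separation, direct twice-differentiation of each summand, the interior stationarity identity $\tr(\Omegabf^{-(\infty)}\Vbf_i) = \ybf^T \Omegabf^{-(\infty)} \Vbf_i \Omegabf^{-(\infty)} \ybf$ to simplify both entries, and the telescoping $\sum_i \tr(\Omegabf^{-(\infty)}\sigma_i^{2(\infty)}\Vbf_i) = \tr(\Ibf_n) = n$ for the average ratio. Your identification of the single stationarity identity as the linchpin of both simplifications is exactly the key step.
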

\begin{proof} See Section \ref{sec:local-conv-rate} of the Supplementary Materials.
\end{proof}
Both the EM and MM algorithms must evaluate the traces $\tr(\Omegabf^{-(t)} \Vbf_i)$ and 
quadratic forms $(\ybf - \Xbf \betabf^{(t)})^T \Omegabf^{-(t)} \Vbf_i \Omegabf^{-(t)} (\ybf - \Xbf \betabf^{(t)})$ at each iteration. Since these quantities are also the building blocks of the approximate rate matrices $d^{2}g(\sigmabf^{2(t)} \!\mid \!\sigmabf^{2(t)})$, one can rationally choose either the EM or MM updates based on which has smaller diagonal entries measured by the $\ell_1$, $\ell_2$, or $\ell_\infty$ norms. At negligible extra cost, this produces a hybrid algorithm that retains the ascent property and enjoys the better of the two convergence rates.

\section{Extensions}

Besides its competitive numerical performance, Algorithm \ref{algo:varcomp-univariate-y} is attractive for its simplicity and ease of generalization. In this section, we outline MM algorithms for multivariate response models possibly with missing data, linear mixed models, MAP estimation, penalized estimation, and generalized estimating equations.

\subsection{Multivariate response model}

Consider the multivariate response model with $n \times d$ response matrix $\Ybf$, mean 
$\E \, \Ybf = \Xbf \Bbf$, and covariance
\begin{eqnarray*}
	\Omegabf &=& \Cov(\vect \Ybf) \amp = \amp  \sum_{i=1}^m \Gammabf_i \otimes \Vbf_i.
\end{eqnarray*}
The $p \times d$ coefficient matrix $\Bbf$ collects the fixed effects, the $\Gammabf_i$ are 
unknown $d \times d$ covariance matrices, and the $\Vbf_i$ are known $n \times n$ covariance
matrices. If the vector $\vect \Ybf$ is normally distributed, then $\Ybf$ equals a sum
of independent matrix normal distributions \citep{Gupta1999matrix}. We now make this assumption and pursue estimation of $\Bbf$ and the $\Gammabf_i$, which we collectively denote as $\Gammabf$. Under
the normality assumption, the Kronecker product identity $\vect(\Cbf \Dbf \Ebf) = (\Ebf^T \otimes \Cbf )\vect(\Dbf)$ yields the log-likelihood 
\begin{eqnarray}
L(\Bbf, \Gammabf) & = & - \frac{1}{2} \ln \det \Omegabf - \frac{1}{2} \vect (\Ybf - \Xbf \Bbf)^T 
\Omegabf^{-1} \vect (\Ybf - \Xbf \Bbf) \label{eqn:loglik-mvt} \\
& = &  - \frac{1}{2} \ln \det \Omegabf - \frac{1}{2}
[\vect \Ybf - (\Ibf_d \otimes \Xbf) \vect \Bbf]^T \Omegabf^{-1} [\vect \Ybf - (\Ibf_d \otimes \Xbf) \vect \Bbf] . \nonumber  
\end{eqnarray}
Updating $\Bbf$ given $\Gammabf^{(t)}$ is accomplished by solving the general 
least squares problem met earlier in the univariate case. Maximization of the log-likelihood
\eqref{eqn:loglik-mvt} is difficult due to the requirement that each $\Gammabf_i$ be positive 
semidefinite. Typical solutions involve reparameterization of the covariance matrix \citep{PinheiroBates96CovParam}. The MM algorithm derived in this section gracefully accommodates 
the covariance constraints.

Updating  $\Gammabf$ given $\Bbf^{(t)}$ requires generalizing the minorization (\ref{eqn:matrix-ineq}).
In view of Lemma \ref{convexity_lemma} and the identities 
$(\Abf \otimes \Bbf)(\Cbf \otimes \Dbf) = (\Abf \Cbf) \otimes (\Bbf \Dbf)$
and $(\Abf \otimes \Bbf)^{-1} = \Abf^{-1} \otimes \Bbf^{-1}$, we have
\begin{eqnarray*}
\Omegabf^{(t)} \Omegabf^{-1} \Omegabf^{(t)} & = &
m \left[\frac{1}{m}\sum_{i=1}^m \Gammabf_i^{(t)}\otimes \Vbf_i \right]
\left[\frac{1}{m}\sum_{i=1}^m \Gammabf_i \otimes \Vbf_i \right]^{-1}
\left[\frac{1}{m}\sum_{i=1}^m \Gammabf_i^{(t)}\otimes \Vbf_i \right] \\
& \preceq & m  \frac{1}{m}\sum_{i=1}^m (\Gammabf_i^{(t)}\otimes \Vbf_i ) 
(\Gammabf_i \otimes \Vbf_i)^{-1}(\Gammabf_i^{(t)} \otimes \Vbf_i )  \\
& = & \sum_{i=1}^m (\Gammabf_i^{(t)} \Gammabf_i^{-1} \Gammabf_i^{(t)}) \otimes \Vbf_i ,
\end{eqnarray*}
or equivalently
\begin{eqnarray}
\Omegabf^{-1}  & \preceq & \Omegabf^{-(t)} \Big[\sum_{i=1}^m (\Gammabf_i^{(t)} \Gammabf_i^{-1} \Gammabf_i^{(t)}) \otimes \Vbf_i \Big]\Omegabf^{-(t)}.\label{matrix_normal_majorization}
\end{eqnarray}
This derivation relies on the invertibility of the matrices $\Vbf_i$.
One can relax this assumption by substituting $\Vbf_{\epsilon,i} = \Vbf_i +\epsilon \Ibf_n$
for $\Vbf_i$ and sending $\epsilon$ to 0.

The majorization \eqref{matrix_normal_majorization} and the minorization \eqref{eqn:logdet-majorization} jointly yield the surrogate
\begin{eqnarray*}
g(\Gammabf \mid \Gammabf^{(t)}) &\!\! = \!\!& -\frac{1}{2}\sum_{i=1}^m 
\Big\{ \tr [\Omegabf^{-(t)} (\Gammabf_i \otimes \Vbf_i)] \!
+ \!(\vect \,\Rbf^{(t)})^T [(\Gammabf_i^{(t)} \Gammabf_i^{-1} \Gammabf_i^{(t)}) \!\otimes \! \Vbf_i] \, (\vect \,\Rbf^{(t)}) \Big\} \!+ \!c^{(t)},
\end{eqnarray*}
where $\Rbf^{(t)}$ is the $n \times d$ matrix satisfying
$\vect \, \Rbf^{(t)} = \Omegabf^{-(t)} \vect (\Ybf - \Xbf \Bbf^{(t)})$ and
$c^{(t)}$ is an irrelevant constant. Based on the Kronecker identities
$(\vect \, \Abf)^T \vect \, \Bbf= \tr(\Abf^T \Bbf)$ and 
$\vect(\Cbf \Dbf \Ebf) = (\Ebf^T \otimes \Cbf )\vect(\Dbf)$, the surrogate can be
rewritten as 
\begin{eqnarray*}
g(\Gammabf \mid \Gammabf^{(t)}) 
& = & -\frac{1}{2}\sum_{i=1}^m \Big\{ \tr [\Omegabf^{-(t)} (\Gammabf_i \otimes \Vbf_i)]
+ \tr(\Rbf^{(t)T} \Vbf_i \Rbf^{(t)} \Gammabf_i^{(t)} \Gammabf_i^{-1} \Gammabf_i^{(t)}) \Big\}+c^{(t)} \\
& = & -\frac{1}{2}\sum_{i=1}^m \Big\{ \tr [\Omegabf^{-(t)} (\Gammabf_i \otimes \Vbf_i)]
+ \tr( \Gammabf_i^{(t)} \Rbf^{(t)T} \Vbf_i \Rbf^{(t)} \Gammabf_i^{(t)} \Gammabf_i^{-1}) \Big\}+c^{(t)} .
\end{eqnarray*}
The first trace is linear in $\Gammabf_i$ with the coefficient of entry $(\Gammabf_i)_{jk}$ equal to
\begin{eqnarray*}
\tr(\Omegabf_{jk}^{-(t)} \Vbf_i) & = & \onebf_n^T (\Vbf_i \odot \Omegabf^{-(t)}_{jk}) \onebf_n,
\end{eqnarray*}
where $\Omegabf_{jk}^{-(t)}$ is the $(j, k)$-th $n \times n$ block of $\Omegabf^{-(t)}$. 
The matrix $\Mbf_i$ of these coefficients can be written as
\begin{eqnarray*}
\Mbf_i & = & \begin{pmatrix} \onebf^T & \zerobf^T & \ldots & \zerobf^T \\
\zerobf^T & \onebf^T & \ldots & \zerobf^T \\
\vdots & \vdots & \ddots & \vdots \\
\zerobf^T & \zerobf^T & \ldots & \onebf^T \end{pmatrix}
\left[ \begin{pmatrix} \Vbf_i & \ldots & \Vbf_i \\
\vdots & \ddots & \vdots \\
\Vbf_i & \ldots & \Vbf_i \end{pmatrix} \odot \Omegabf^{(-t)} \right]
\begin{pmatrix} \onebf & \zerobf & \ldots & \zerobf \\
\zerobf & \onebf & \ldots & \zerobf \\
\vdots & \vdots & \ddots & \vdots \\
\zerobf & \zerobf & \ldots & \onebf \end{pmatrix} \\
& = & (\Ibf_d \otimes \onebf_n)^T [(\onebf_d \onebf_d^T \otimes \Vbf_i) \odot \Omegabf^{-(t)}]  
(\Ibf_d \otimes \onebf_n) .
\end{eqnarray*}  

The directional derivative of $g(\Gammabf \mid \Gammabf^{(t)})$ with respect to $\Gammabf_i$ in the direction $\Deltabf_i$ is
\begin{eqnarray*}
&  & - \frac{1}{2}\tr(\Mbf_i \Deltabf_i)+\frac{1}{2}\tr( \Gammabf_i^{(t)} \Rbf^{(t)T} \Vbf_i \Rbf^{(t)} \Gammabf_i^{(t)} \Gammabf_i^{-1}\Deltabf_i \Gammabf_i^{-1}) \\
& = & - \frac{1}{2}\tr(\Mbf_i \Deltabf_i)+\frac{1}{2}\tr(\Gammabf_i^{-1} \Gammabf_i^{(t)} \Rbf^{(t)T} \Vbf_i \Rbf^{(t)} \Gammabf_i^{(t)} \Gammabf_i^{-1}\Deltabf_i ).
\end{eqnarray*} 
Because all directional derivatives of $g(\Gammabf \mid \Gammabf^{(t)})$ vanish
at a stationarity point, the matrix equation
\begin{eqnarray}
\Mbf_i  & = & \Gammabf_i^{-1} \Gammabf_i^{(t)} \Rbf^{(t)T} \Vbf_i \Rbf^{(t)} \Gammabf_i^{(t)} \Gammabf_i^{-1} \label{eqn:mvt-stationarity} 
\end{eqnarray}
holds. Fortunately, this equation admits an explicit solution.
For positive scalers $a$ and $b$, the solution to the equation 
$b = x^{-1}a x^{-1}$ is $x = \pm \sqrt{a/b}$.
The matrix analogue of this equation is the  Riccati equation $\Bbf = \Xbf^{-1} \Abf \Xbf^{-1}$,
whose solution is summarized in the next lemma.

\begin{algorithm}[tb!]
\SetKwInOut{Input}{input}\SetKwInOut{Output}{output}
\Input{$\Ybf$, $\Xbf$, $\Vbf_1, \ldots, \Vbf_m$}
\Output{MLE $\widehat \Bbf$, $\widehat \Gammabf_1, \ldots, \widehat \Gammabf_m$}
Initialize $\Gammabf_i^{(0)}$ positive definite, $i=1,\ldots,m$ \;
\Repeat{objective value converges}{
$\Omegabf^{(t)} \gets \sum_{i=1}^m \Gammabf_i^{(t)} \otimes \Vbf_i$ \;
$\Bbf^{(t)} \gets \arg \min_{\Bbf} \, [\vect \Ybf - (\Ibf_d \otimes \Xbf) \vect \Bbf]^T \Omegabf^{-(t)} [\vect \Ybf - (\Ibf_d \otimes \Xbf) \vect \Bbf]$ \label{eqn:update-beta} \;
$\Rbf^{(t)} \gets \text{reshape}(\Omegabf^{-(t)} \vect (\Ybf - \Xbf \Bbf^{(t)}), n, d)$ \;
\For{$i=1, \ldots, m$}{
Cholesky $\Lbf_i^{(t)} \Lbf_i^{(t)T} \gets (\Ibf_d \otimes \onebf_n)^T [(\onebf_d \onebf_d^T \otimes \Vbf_i) \odot \Omegabf^{-(t)}]  (\Ibf_d \otimes \onebf_n)$ \;
$\Gammabf_i^{(t+1)} \gets \Lbf_i^{-(t)T} [\Lbf_i^{(t)T} (\Gammabf_i^{(t)} \Rbf^{(t)T} \Vbf_i \Rbf^{(t)} \Gammabf_i^{(t)}) \Lbf_i^{(t)}]^{1/2} \Lbf_i^{-(t)}$
}
}
\caption{The MM algorithm for MLE of the multivariate response model \eqref{eqn:loglik-mvt}.}
\label{algo:varcomp-mvt-y}
\end{algorithm}
 
\begin{lemma} \label{lemma:ricatti-equation}
Assume $\Abf$ and $\Bbf$ are positive definite and $\Lbf$ is the Cholesky factor of $\Bbf$. Then 
$\Ybf = \Lbf^{-T} (\Lbf^{T} \Abf \Lbf)^{1/2} \Lbf^{-1}$ is the unique positive definite solution to the matrix equation $\Bbf = \Xbf^{-1} \Abf \Xbf^{-1}$.
\end{lemma}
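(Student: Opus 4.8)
The plan is to convert the Riccati equation into an ordinary matrix square-root problem by absorbing the Cholesky factor of $\Bbf$. First I would use the invertibility of $\Xbf$ and $\Bbf$ to rewrite $\Bbf = \Xbf^{-1}\Abf\Xbf^{-1}$ in the equivalent form $\Xbf\Bbf\Xbf = \Abf$. Substituting $\Bbf = \Lbf\Lbf^T$ and conjugating both sides by $\Lbf^T$ on the left and $\Lbf$ on the right yields $(\Lbf^T\Xbf\Lbf)(\Lbf^T\Xbf\Lbf) = \Lbf^T\Abf\Lbf$, so that with the abbreviation $\Zbf = \Lbf^T\Xbf\Lbf$ the problem reduces to $\Zbf^2 = \Lbf^T\Abf\Lbf$.

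Next I would invoke the uniqueness of the positive definite square root. Since $\Xbf \succ \zerobf$ and $\Lbf$ is nonsingular, $\Zbf = \Lbf^T\Xbf\Lbf$ is symmetric positive definite; likewise $\Lbf^T\Abf\Lbf \succ \zerobf$. A positive definite matrix has exactly one positive definite square root, so the equation $\Zbf^2 = \Lbf^T\Abf\Lbf$ forces $\Zbf = (\Lbf^T\Abf\Lbf)^{1/2}$. Undoing the congruence gives $\Xbf = \Lbf^{-T}(\Lbf^T\Abf\Lbf)^{1/2}\Lbf^{-1} = \Ybf$, and this matrix is positive definite because it is a congruence transform of the positive definite matrix $(\Lbf^T\Abf\Lbf)^{1/2}$. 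A direct substitution then confirms that $\Ybf$ genuinely satisfies $\Ybf\Bbf\Ybf = \Abf$.

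For uniqueness I would observe that every step above is reversible: the map $\Xbf \mapsto \Lbf^T\Xbf\Lbf$ is a bijection of the cone of positive definite matrices onto itself, so any positive definite solution of the Riccati equation corresponds to a positive definite solution $\Zbf$ of $\Zbf^2 = \Lbf^T\Abf\Lbf$, and hence must equal $\Ybf$. The only delicate point is the uniqueness of the \emph{positive definite} square root of $\Lbf^T\Abf\Lbf$ among all its square roots (many of which are neither symmetric nor positive definite); this is the step that actually pins down $\Zbf$. It is standard and follows by diagonalizing $\Lbf^T\Abf\Lbf$ with an orthogonal matrix and taking the positive roots of its eigenvalues, so I expect no real obstacle beyond keeping the congruence bookkeeping straight.
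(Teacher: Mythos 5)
Your argument is correct and uses essentially the same mechanism as the paper's proof: reduce the Riccati equation by a congruence to a matrix square-root equation and invoke the uniqueness of the positive definite square root. The only (harmless) difference is that the paper verifies existence by direct substitution and conjugates by the symmetric square root $\Bbf^{1/2}$ in its uniqueness step, whereas you conjugate by the Cholesky factor $\Lbf$ throughout, which has the minor virtue of deriving the stated formula for $\Ybf$ constructively rather than merely checking it.
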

\begin{proof}
Direct substitution shows that $\Ybf$ solves the equivalent equation
$\Xbf \Bbf \Xbf = \Abf$. To show uniqueness, suppose $\Ybf^{-1} \Abf \Ybf^{-1} = \Bbf$ and $\Zbf^{-1} \Abf \Zbf^{-1} = \Bbf$. The equations 
\begin{eqnarray*}
(\Bbf^{1/2} \Ybf \Bbf^{1/2})^2 & = & \Bbf^{1/2} \Ybf \Bbf \Ybf \Bbf^{1/2} 
\amp = \amp \Bbf^{1/2} \Abf \Bbf^{1/2} \\
(\Bbf^{1/2} \Zbf \Bbf^{1/2})^2 & = & \Bbf^{1/2}\Zbf \Bbf \Zbf \Bbf^{1/2} 
\amp = \amp \Bbf^{1/2} \Abf \Bbf^{1/2}
\end{eqnarray*}
imply $\Bbf^{1/2} \Ybf \Bbf^{1/2} = \Bbf^{1/2} \Zbf \Bbf^{1/2}$ by virtue of the uniqueness 
of symmetric square root. Since $\Bbf^{-1/2}$ is positive definite, $\Ybf = \Zbf$.
\end{proof}
The Cholesky factor $\Lbf$ in Lemma \ref{lemma:ricatti-equation} can be replaced by the symmetric square root of $\Bbf$. The solution, which is unique, remains the same. The Cholesky decomposition is preferred for its cheaper computational cost and better numerical stability.

Algorithm \ref{algo:varcomp-mvt-y} summarizes the MM algorithm for fitting the multi-response model \eqref{algo:varcomp-mvt-y}. Each iteration invokes $m$ Cholesky decompositions and symmetric
square roots of $d \times d$ positive definite matrices. Fortunately in most applications, $d$ is a small number. The following result guarantees the non-singularity of the Cholesky factor throughout iterations.
\begin{proposition}
\label{prop:mvt-nonzero-denom}
Assume $\Vbf_i$ has strictly positive diagonal entries. Then the symmetric matrix 
$\Mbf_i  =  (\Ibf_d \otimes \onebf_n)^T [(\onebf_d \onebf_d^T \otimes \Vbf_i) \odot \Omegabf^{-(t)}]  
(\Ibf_d \otimes \onebf_n)$ is positive definite for all $t$. Furthermore if $\Gammabf_i^{(0)} \succ \zerobf$ and no column of $\Rbf^{(t)}$ lies in the null space of $\Vbf_i$ for all $t$, then $\Gammabf_i^{(t)} \succ \zerobf$ for all $t$. 
\end{proposition}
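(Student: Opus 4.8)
The plan is to mirror the univariate argument of Proposition \ref{prop:nonzero-denom}, upgrading scalar positivity to matrix positive definiteness and the scalar update to the Riccati solution of Lemma \ref{lemma:ricatti-equation}. For the first claim I would exhibit $\Mbf_i$ as a congruence of a positive definite matrix by a full-column-rank matrix. The Kronecker factor $\onebf_d \onebf_d^T \otimes \Vbf_i$ is positive semidefinite, and its diagonal entries $(\onebf_d \onebf_d^T)_{jj}(\Vbf_i)_{kk} = (\Vbf_i)_{kk}$ are strictly positive by the hypothesis on $\Vbf_i$. Since $\Omegabf^{-(t)}$ is positive definite (as it is throughout the algorithm), Schur's lemma shows the Hadamard product $(\onebf_d \onebf_d^T \otimes \Vbf_i) \odot \Omegabf^{-(t)}$ is positive definite. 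Finally $\Ibf_d \otimes \onebf_n$ is an $nd \times d$ matrix of full column rank $d$, because its columns $\ebf_j \otimes \onebf_n$ have disjoint supports. A congruence $\Abf^T \Pbf \Abf$ of a positive definite $\Pbf$ by a full-column-rank $\Abf$ is positive definite, so $\Mbf_i \succ \zerobf$ for every $t$.

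For the second claim I would induct on $t$, the base case $\Gammabf_i^{(0)} \succ \zerobf$ being assumed. Suppose $\Gammabf_i^{(t)} \succ \zerobf$. Maximizing the surrogate produces, via the stationarity equation \eqref{eqn:mvt-stationarity}, the Riccati relation $\Mbf_i = (\Gammabf_i^{(t+1)})^{-1} \Abf_i^{(t)} (\Gammabf_i^{(t+1)})^{-1}$ with $\Abf_i^{(t)} = \Gammabf_i^{(t)} \Rbf^{(t)T} \Vbf_i \Rbf^{(t)} \Gammabf_i^{(t)}$. By the first claim $\Mbf_i \succ \zerobf$, so its Cholesky factor $\Lbf_i^{(t)}$ is nonsingular, and Lemma \ref{lemma:ricatti-equation} identifies the update $\Gammabf_i^{(t+1)} = \Lbf_i^{-(t)T}[\Lbf_i^{(t)T}\Abf_i^{(t)}\Lbf_i^{(t)}]^{1/2}\Lbf_i^{-(t)}$ as the unique positive definite solution \emph{provided $\Abf_i^{(t)}$ is positive definite}. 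Thus the induction closes once I establish $\Abf_i^{(t)} \succ \zerobf$.

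The crux, and the step I expect to require the most care, is precisely the positive definiteness of $\Abf_i^{(t)}$. Since $\Gammabf_i^{(t)} \succ \zerobf$ is invertible, congruence reduces the task to showing $\Rbf^{(t)T} \Vbf_i \Rbf^{(t)} \succ \zerobf$. Writing $\Vbf_i \succeq \zerobf$ through its symmetric square root gives $\Rbf^{(t)T}\Vbf_i\Rbf^{(t)} = (\Vbf_i^{1/2}\Rbf^{(t)})^T(\Vbf_i^{1/2}\Rbf^{(t)})$, which is positive definite exactly when the columns of $\Vbf_i^{1/2}\Rbf^{(t)}$ are linearly independent, i.e. when no nonzero vector in the column space of $\Rbf^{(t)}$ lies in $\text{null}(\Vbf_i)$. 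The literal hypothesis that no individual column of $\Rbf^{(t)}$ lies in $\text{null}(\Vbf_i)$ yields only the positivity of the diagonal entries of $\Rbf^{(t)T}\Vbf_i\Rbf^{(t)}$; the obstacle is upgrading this to full positive definiteness. This is immediate when $\Vbf_i$ is positive definite, for then $\text{null}(\Vbf_i) = \{\zerobf\}$ and $\Rbf^{(t)T}\Vbf_i\Rbf^{(t)} \succ \zerobf$ whenever $\Rbf^{(t)}$ has full column rank, and it is the intended reading of the null-space condition in the general semidefinite case. With $\Abf_i^{(t)} \succ \zerobf$ secured, Lemma \ref{lemma:ricatti-equation} delivers $\Gammabf_i^{(t+1)} \succ \zerobf$, completing the induction.
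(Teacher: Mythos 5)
Your proof takes essentially the same route as the paper's: Schur's lemma plus the full-column-rank congruence $(\Ibf_d \otimes \onebf_n)^T(\cdot)(\Ibf_d \otimes \onebf_n)$ for the first claim, then induction through the Riccati solution of Lemma \ref{lemma:ricatti-equation} for the second. Your caveat about the null-space hypothesis is in fact sharper than the paper itself, whose proof simply asserts that ``no column of $\Rbf^{(t)}$ lies in $\text{null}(\Vbf_i)$'' makes $\Gammabf_i^{(t)} \Rbf^{(t)T} \Vbf_i \Rbf^{(t)} \Gammabf_i^{(t)}$ positive definite --- as you correctly observe, when $\Vbf_i$ is singular the columnwise condition only yields positive diagonal entries of $\Rbf^{(t)T}\Vbf_i\Rbf^{(t)}$, and the condition actually needed (and presumably intended) is that no nonzero vector in the column space of $\Rbf^{(t)}$ lies in $\text{null}(\Vbf_i)$.
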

\begin{proof}
If $\Vbf_i$ has strictly positive diagonal entries, then so does $\onebf_d \onebf_d^T \otimes \Vbf_i$,
and the Hadamard product $(\onebf_d \onebf_d^T \otimes \Vbf_i) \odot \Omegabf^{-(t)}$ is positive definite by Schur's lemma. Since the matrix $\Ibf_d \otimes \onebf_n$ has full column rank $d$, the matrix 
$\Mbf_i$ is also positive definite. Finally, if no column of $\Rbf^{(t)}$ lies in the null space of $\Vbf_i$, and $\Gammabf^{(t)}$ is positive define, then $\Gammabf_i^{(t)} \Rbf^{(t)T} \Vbf_i \Rbf^{(t)} \Gammabf_i^{(t)}$ is positive definite. The second claim follows by induction and Lemma
\ref{lemma:ricatti-equation}.
\end{proof}

\subsection*{Multivariate response, two variance components}

When there are $m=2$ variance components $\Omegabf = \Gammabf_1 \otimes \Vbf_1 + \Gammabf_2 \otimes \Vbf_2$, repeated inversion of the $nd \times nd$ covariance matrix $\Omegabf$ reduces to a single 
$nd \times nd$ simultaneous congruence decomposition and, per iteration, two $d \times d$ Cholesky decompositions and one $d \times d$ simultaneous congruence decomposition. The simultaneous congruence decomposition of the matrix pair $(\Vbf_1, \Vbf_2)$ involves generalized eigenvalues $\dbf = (d_1, \ldots, d_n)$ and a nonsingular matrix $\Ubf$ such that $\Ubf^T \Vbf_1 \Ubf = \Dbf = \diag(\dbf)$ and $\Ubf^{T} \Vbf_2 \Ubf =  \Ibf$. If the simultaneous congruence decomposition of $(\Gammabf_1^{(t)}, \Gammabf_2^{(t)})$ is $(\Lambdabf^{(t)}, \Phibf^{(t)})$ with $\Phibf^{(t)T} \Gammabf_1^{(t)} \Phibf^{(t)}= \Lambdabf^{(t)} = \diag(\lambdabf^{(t)})$ and $\Phibf^{(t)T} \Gammabf_2^{(t)} \Phibf^{(t)} = \Ibf_d$, then 
\begin{eqnarray*}
	\Omegabf^{(t)} &=& (\Phibf^{-(t)} \otimes \Ubf^{-1})^T (\Lambdabf^{(t)} \otimes \Dbf + \Ibf_d \otimes \Ibf_n) (\Phibf^{-(t)} \otimes \Ubf^{-1}) \\
	\Omegabf^{-(t)} &=& (\Phibf^{(t)} \otimes \Ubf) (\Lambdabf^{(t)} \otimes \Dbf + \Ibf_d \otimes \Ibf_n)^{-1} (\Phibf^{(t)} \otimes \Ubf)^T \\
	\det \Omegabf^{(t)} &=& \det (\Lambdabf^{(t)} \otimes \Dbf + \Ibf_d \otimes \Ibf_n) \det[ (\Phibf^{-(t)} \otimes \Ubf^{-1})^T (\Phibf^{-(t)} \otimes \Ubf^{-1})] \\
	&=& \det (\Lambdabf^{(t)} \otimes \Dbf + \Ibf_d \otimes \Ibf_n) \det (\Gammabf_2^{(t)} \otimes \Vbf_2) \\
	&=& \det (\Lambdabf^{(t)} \otimes \Dbf + \Ibf_d \otimes \Ibf_n) \det (\Gammabf_2^{(t)})^n \det(\Vbf_2)^d.
\end{eqnarray*}
Updating the fixed effects reduces to a weighted least squares problem for the transformed responses 
$\tilde \Ybf = \Ubf^T \Ybf$, transformed predictor matrix $\tilde \Xbf = \Ubf^T \Xbf$,
and observation weights $(\lambda_k^{(t)} d_i + 1)^{-1}$. Algorithm \ref{algo:2vc-mvt-y} summarizes the simplified MM algorithm. Detailed derivations are relegated to Section
\ref{sec:derive-mvt-2vc-algo}
of the Supplementary Materials.

\begin{algorithm}[ht!]
\SetKwInOut{Input}{Input}\SetKwInOut{Output}{Output}
\Input{$\Ybf$, $\Xbf$, $\Vbf_1$, $\Vbf_2$}
\Output{MLE $\widehat \Bbf$, $\widehat \Gammabf_1$, $\widehat \Gammabf_2$}
Simultaneous congruence decomposition: $(\Dbf, \Ubf) \gets (\Vbf_1, \Vbf_2)$ \;
Transform data: $\tilde \Ybf \gets \Ubf^T \Ybf$, $\tilde \Xbf \gets \Ubf^T \Xbf$ \;
Initialize $\Gammabf_1^{(0)}$, $\Gammabf_2^{(0)}$ positive definite \;
\Repeat{objective value converges}{
Simultaneous congruence decomposition $(\Lambdabf^{(t)}, \Phibf^{(t)}) \gets (\Gammabf_1^{(t)}, \Gammabf_2^{(t)})$ \;
$\Bbf^{(t)} \gets \arg \min_{\Bbf} \, [\vect (\tilde \Ybf \Phibf^{(t)}) - ( \Phibf^{(t)T} \otimes \tilde \Xbf) \vect \Bbf]^T (\Lambdabf^{(t)} \otimes \Dbf + \Ibf_d \otimes \Ibf_n)^{-1} [\vect (\tilde \Ybf \Phibf^{(t)}) - ( \Phibf^{(t)T} \otimes \tilde \Xbf) \vect \Bbf]$ \;
Cholesky $\Lbf_1^{(t)} \Lbf_1^{(t)T} \gets \Phibf^{(t)} \diag \left( \tr \left( \Dbf (\lambda_k^{(t)} \Dbf + \Ibf_n)^{-1} \right), k=1,\ldots,d \right) \Phibf^{(t)T}$ \;
Cholesky $\Lbf_2^{(t)} \Lbf_2^{(t)T} \gets 	\Phibf^{(t)} \diag \left( \tr \left( (\lambda_k^{(t)} \Dbf + \Ibf_n)^{-1} \right), k=1,\ldots,d \right) \Phibf^{(t)T}$ \;
$\Nbf_1^{(t)} \gets \Dbf^{1/2} [(\tilde \Ybf - \tilde \Xbf \Bbf^{(t)}) \Phibf^{(t)}) \oslash (\dbf \lambdabf^{(t)T} + \onebf_n \onebf_d^T)] \Lambdabf^{(t)} \Phibf^{-(t)}$ \;
$\Nbf_2^{(t)} \gets [(\tilde \Ybf - \tilde \Xbf \Bbf^{(t)}) \Phibf^{(t)}) \oslash (\dbf \lambdabf^{(t)T} + \onebf_n \onebf_d^T)] \Phibf^{-(t)}$ \;
$\Gammabf_i^{(t+1)} \gets \Lbf_i^{-(t)T} (\Lbf_i^{(t)T} \Mbf_i^{(t)T} \Mbf_i^{(t)} \Lbf_i^{(t)})^{1/2} \Lbf_i^{-(t)}$, $i=1,2$ \;
}
\caption{MM algorithm for multivariate response model 
$\Omegabf = \Gammabf_1 \otimes \Vbf_1 + \Gammabf_2 \otimes \Vbf_2$
with two variance components matrices. Note that $\oslash$ denotes a Hadamard
quotient.}
\label{algo:2vc-mvt-y}
\end{algorithm}

\subsection{Multivariate response model with missing responses}

In many applications the multivariate response model \eqref{eqn:loglik-mvt} involves missing responses. For instance, in testing multiple longitudinal traits in genetics, some trait values $y_{ij}$ may be missing due to dropped patient visits, while their genetic covariates are complete. Missing data destroys the symmetry of the log-likelihood \eqref{eqn:loglik-mvt} and complicates finding the MLE. Fortunately, MM algorithm \ref{algo:varcomp-mvt-y} easily adapts to this challenge.

The familiar EM argument \citep[Section 2.2]{McLachlan08EMBook} shows that
\begin{eqnarray}
	- \frac n2 \ln \det \Omegabf^{(t)} - \frac 12 \tr \{ \Omegabf^{-(t)} [ \vect(\Zbf^{(t)} - \Xbf \Bbf^{(t)})\vect(\Zbf^{(t)} - \Xbf \Bbf^{(t)})^T + \Cbf^{(t)}]\} \label{eqn:missing-minorizer}
\end{eqnarray}
minorizes the observed log-likelihood at the current iterate $(\Bbf^{(t)}, \Gammabf_1^{(t)}, \ldots, \Gammabf_m^{(t)})$. Here $\Zbf^{(t)}$ is the completed response matrix given the observed responses
$\Ybf_{\text{obs}}^{(t)}$ and the current parameter values. The complete data $\Ybf$ is assumed 
to be normally distributed  $N( \vect (\Xbf \Bbf^{(t)}), \Omegabf^{(t)})$. The 
block matrix $\Cbf^{(t)}$ is 0 except for a lower-right block consisting of a Schur complement. 

To maximize the surrogate \eqref{eqn:missing-minorizer}, we invoke the familiar minorization \eqref{eqn:logdet-majorization} and majorization \eqref{matrix_normal_majorization} to separate the variance components $\Gammabf_i$. At each iteration we impute missing entries by their conditional means, compute their conditional variances and covariances to supply the Schur complement, and then update the fixed effects and variance components by the explicit updates of Algorithm \ref{algo:varcomp-mvt-y}. The required conditional means and conditional variances can be conveniently obtained in the process of inverting $\Omegabf^{(t)}$ by the sweep operator of computational statistics \citep[Section 7.3]{Lange10NumAnalBook}. 

\subsection{Linear mixed model (LMM)}

The linear mixed model plays a central role in longitudinal data analysis. For the sake of
simplicity, consider the single-level LMM \citep{LairdWare82LMM,BatesPinheiro98Multilevel} for $n$ independent data clusters 
$(\ybf_i, \Xbf_i, \Zbf_i)$ with
\begin{eqnarray*}
\Ybf_i &=& \Xbf_i \betabf + \Zbf_i \gammabf_i + \epsilonbf_i, \quad i = 1,\ldots,n,
\end{eqnarray*}
where $\betabf$ is a vector of fixed effects, the $\gammabf_i \sim N(\zerobf, \Rbf_i(\thetabf))$ are independent random effects, and $\epsilonbf_i \sim N(\zerobf, \sigma^2 \Ibf_{n_i})$ captures random noise independent of $\gammabf_i$. We assume the matrices $\Zbf_i$ have full column rank. The
within-cluster covariance matrices $\Rbf_i(\thetabf)$ depend on a parameter vector $\thetabf$; 
typical choices for $\Rbf_i(\thetabf)$ impose autocorrelation, compound symmetry, or unstructured correlation.  It is clear that $\Ybf_i$ is normal with mean $\Xbf_i \betabf$, covariance 
$\Omegabf_i = \Zbf_i \Rbf_i(\thetabf) \Zbf_i^T + \sigma^2 \Ibf_{n_i}$, and log-likelihood
\begin{eqnarray*}
	L_i(\betabf, \thetabf, \sigma^2) &=&  - \frac 12 \ln \det \Omegabf_i - \frac{1}{2} (\ybf_i - \Xbf_i \betabf)^T \Omegabf_i^{-1} (\ybf_i - \Xbf_i \betabf).
\end{eqnarray*}
The next three facts about pseudo-inverses are used in deriving the MM algorithm for LMM.
\begin{lemma}
\label{lem:MPinv-product}
If $\Abf$ has full column rank and $\Bbf$ has full row rank, then $(\Abf\Bbf)^+ = \Bbf^+\Abf^+$. 
\end{lemma}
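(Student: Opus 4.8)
The plan is to appeal to the uniqueness of the Moore--Penrose pseudoinverse: it suffices to exhibit a single matrix satisfying the four defining Penrose conditions for $\Cbf = \Abf\Bbf$ and to check that the candidate $\Dbf = \Bbf^+ \Abf^+$ does so. Concretely, I would verify $\Cbf\Dbf\Cbf = \Cbf$, $\Dbf\Cbf\Dbf = \Dbf$, $(\Cbf\Dbf)^T = \Cbf\Dbf$, and $(\Dbf\Cbf)^T = \Dbf\Cbf$. The leverage comes from the hypotheses: because $\Abf$ has full column rank and $\Bbf$ has full row rank, their common inner dimension equals their common rank $r$, so $\Abf\Bbf$ is a rank-$r$ (full-rank) factorization, and the one-sided inverses are explicit, namely $\Abf^+ = (\Abf^T\Abf)^{-1}\Abf^T$ and $\Bbf^+ = \Bbf^T(\Bbf\Bbf^T)^{-1}$, the two Gram matrices being invertible precisely because of the rank assumptions.

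The first step I would carry out is to record the two one-sided identities these formulas deliver: $\Abf^+\Abf = \Ibf_r$ and $\Bbf\Bbf^+ = \Ibf_r$. These are the crux, since full column rank of $\Abf$ makes $\Abf^+$ a genuine left inverse and full row rank of $\Bbf$ makes $\Bbf^+$ a genuine right inverse. I would also note that $\Abf\Abf^+ = \Abf(\Abf^T\Abf)^{-1}\Abf^T$ and $\Bbf^+\Bbf = \Bbf^T(\Bbf\Bbf^T)^{-1}\Bbf$ are symmetric, being the orthogonal projectors onto the column space of $\Abf$ and the row space of $\Bbf$.

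With these in hand the four conditions fall out immediately. Since $\Bbf\Bbf^+ = \Ibf_r$, the product collapses to $\Cbf\Dbf = \Abf\Bbf\Bbf^+\Abf^+ = \Abf\Abf^+$, which is symmetric, settling the third condition; symmetrically $\Abf^+\Abf = \Ibf_r$ gives $\Dbf\Cbf = \Bbf^+\Bbf$, symmetric, settling the fourth. Then $\Cbf\Dbf\Cbf = \Abf\Abf^+\Abf\Bbf = \Abf\Bbf = \Cbf$ and $\Dbf\Cbf\Dbf = \Bbf^+\Abf^+\Abf\Abf^+ = \Bbf^+\Abf^+ = \Dbf$, both using $\Abf^+\Abf = \Ibf_r$. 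All four Penrose conditions hold, so uniqueness forces $(\Abf\Bbf)^+ = \Bbf^+\Abf^+$.

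There is no real obstacle here beyond careful bookkeeping; the one point the write-up must make transparent is exactly where each hypothesis enters, since the reversal identity is false in general. Both full-rank assumptions are used solely to guarantee invertibility of $\Abf^T\Abf$ and $\Bbf\Bbf^T$, and hence the collapsing identities $\Abf^+\Abf = \Ibf_r = \Bbf\Bbf^+$ on which every verification step rests.
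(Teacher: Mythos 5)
Your proof is correct and follows essentially the same route as the paper's: both use the full-rank hypotheses to write $\Abf^+ = (\Abf^T\Abf)^{-1}\Abf^T$ and $\Bbf^+ = \Bbf^T(\Bbf\Bbf^T)^{-1}$ explicitly and then confirm that $\Bbf^+\Abf^+$ satisfies the four Penrose conditions characterizing $(\Abf\Bbf)^+$. The only difference is that you spell out the verification (via the collapsing identities $\Abf^+\Abf = \Ibf = \Bbf\Bbf^+$) that the paper asserts without detail.
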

\begin{proof} Under the hypotheses, the representations $\Abf^+ = (\Abf^T\Abf)^+ \Abf^T = (\Abf^T\Abf)^{-1} \Abf^T$ and $\Bbf^{+} = \Bbf^T (\Bbf\Bbf^T)^{-1}$ are well known. The choice  
$\Bbf^+ \Abf^+ =  \Bbf^T (\Bbf\Bbf^T)^{-1}(\Abf^T\Abf)^{-1} \Abf^T$ satisfies the four equations
characterizing the pseudo-inverse of $\Abf \Bbf$.
\end{proof}

\begin{lemma}
\label{lem:MPinv-limit}
If $\Abf$ and $\Bbf$ are positive semidefinite matrices with the same range, then
\begin{eqnarray*}
\lim_{\epsilon \downarrow 0} \, (\Bbf+\epsilon \Ibf)(\Abf+\epsilon \Ibf)^{-1}
(\Bbf+\epsilon \Ibf) & = & \Bbf \Abf^+ \Bbf .
\end{eqnarray*}
\end{lemma}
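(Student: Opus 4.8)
The plan is to exploit the shared range of $\Abf$ and $\Bbf$ by diagonalizing both simultaneously in an orthonormal basis adapted to the decomposition $\real{n} = \mathcal{R} \oplus \mathcal{R}^\perp$, where $\mathcal{R}$ denotes the common range. Because $\Abf$ and $\Bbf$ are symmetric and positive semidefinite, $\mathcal{R}^\perp$ is the null space of each, so both matrices vanish off $\mathcal{R}$. First I would fix an orthogonal matrix $\Qbf = (\Qbf_r, \Qbf_{n-r})$ whose leading $r = \rank(\Abf) = \rank(\Bbf)$ columns span $\mathcal{R}$, and write
\[
\Qbf^T \Abf \Qbf = \begin{pmatrix} \Abf_{11} & \zerobf \\ \zerobf & \zerobf \end{pmatrix}, \qquad
\Qbf^T \Bbf \Qbf = \begin{pmatrix} \Bbf_{11} & \zerobf \\ \zerobf & \zerobf \end{pmatrix},
\]
where $\Abf_{11}, \Bbf_{11} \in \real{r \times r}$ are positive definite, since the restriction of each matrix to its own range is nonsingular.

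Substituting these block forms into the expression, I would observe that $\Ibf$, $\Abf$, and $\Bbf$ are all block diagonal in this basis, so every product stays block diagonal and no cross terms appear. A direct block computation then gives
\[
\Qbf^T (\Bbf + \epsilon \Ibf)(\Abf + \epsilon \Ibf)^{-1}(\Bbf + \epsilon \Ibf) \Qbf = \begin{pmatrix} (\Bbf_{11} + \epsilon \Ibf_r)(\Abf_{11} + \epsilon \Ibf_r)^{-1}(\Bbf_{11} + \epsilon \Ibf_r) & \zerobf \\ \zerobf & \epsilon \Ibf_{n-r} \end{pmatrix},
\]
the lower-right block arising from $(\epsilon \Ibf_{n-r})(\epsilon \Ibf_{n-r})^{-1}(\epsilon \Ibf_{n-r}) = \epsilon \Ibf_{n-r}$.

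Next I would send $\epsilon \downarrow 0$. Since $\Abf_{11}$ is invertible, the upper-left block is continuous at $\epsilon = 0$ and converges to $\Bbf_{11} \Abf_{11}^{-1} \Bbf_{11}$, while the lower-right block $\epsilon \Ibf_{n-r}$ vanishes. Finally I would identify this limit with $\Bbf \Abf^+ \Bbf$ by recalling that the Moore--Penrose inverse inherits the same block structure, $\Qbf^T \Abf^+ \Qbf = \diag(\Abf_{11}^{-1}, \zerobf)$; multiplying on each side by $\Qbf^T \Bbf \Qbf = \diag(\Bbf_{11}, \zerobf)$ and using $\Qbf \Qbf^T = \Ibf$ yields $\Qbf^T \Bbf \Abf^+ \Bbf \Qbf = \diag(\Bbf_{11} \Abf_{11}^{-1} \Bbf_{11}, \zerobf)$, matching the computed limit block for block. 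Conjugating back by $\Qbf$ delivers the claim.

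The argument is essentially computational, so no serious obstacle arises; the one point demanding care is the null-space block. There the factor $\epsilon^{-1}$ generated by inverting $\epsilon \Ibf_{n-r}$ is outweighed by the two flanking factors of $\epsilon$, leaving a residual $\epsilon \Ibf_{n-r}$ that correctly tends to $\zerobf$. This is precisely where the equal-range hypothesis is indispensable: were the ranges of $\Abf$ and $\Bbf$ different, the matrices would fail to share a block decomposition, the null-space block would no longer cancel, and the clean Riccati-type limit $\Bbf \Abf^+ \Bbf$ would break down.
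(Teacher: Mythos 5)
Your proof is correct, but it takes a different route from the paper's. The paper stays in the original coordinates: it forms the orthogonal projector $\Pbf = \sum_{\lambda_i > 0} \ubf_i \ubf_i^T$ onto the common range from the spectral decomposition of $\Abf$, uses $\Pbf \Bbf = \Bbf \Pbf = \Bbf$ to expand $(\Bbf+\epsilon \Ibf)(\Abf+\epsilon \Ibf)^{-1}(\Bbf+\epsilon \Ibf)$ into four terms, and then shows via the explicit eigenvalue formulas $\Pbf(\Abf+\epsilon\Ibf)^{-1}\Pbf = \sum_{\lambda_i>0}(\lambda_i+\epsilon)^{-1}\ubf_i\ubf_i^T$ and $\epsilon^2(\Abf+\epsilon\Ibf)^{-1} = \sum_i \epsilon^2(\lambda_i+\epsilon)^{-1}\ubf_i\ubf_i^T$ that three terms vanish while the first tends to $\Bbf\Abf^+\Bbf$. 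You instead conjugate by an orthogonal $\Qbf$ adapted to $\mathcal{R}\oplus\mathcal{R}^\perp$, which block-diagonalizes everything at once; the limit then reduces to continuity of inversion on the positive definite block plus the visible cancellation $\epsilon\cdot\epsilon^{-1}\cdot\epsilon = \epsilon$ on the null-space block. Both arguments pivot on the same structural fact (the common range reduces both matrices), but your packaging buys a shorter, more transparent computation with no cross terms to track, while the paper's avoids choosing a basis and exhibits $\Abf^+$ directly as the limit of $\Pbf(\Abf+\epsilon\Ibf)^{-1}\Pbf$. One small overstatement in your closing remark: equal ranges are not truly indispensable --- your own block computation goes through verbatim under the weaker hypothesis $\mathrm{range}(\Bbf) \subseteq \mathrm{range}(\Abf)$, with $\Bbf_{11}$ merely positive semidefinite, since positive definiteness of $\Bbf_{11}$ is never used; what genuinely fails without some range containment is the vanishing of the null-space contribution.
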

\begin{proof} Suppose $\Abf$ has spectral decomposition $\sum_i \lambda_i \ubf_i \ubf_i^T$.
The matrix $\Pbf = \sum_{\lambda_i>0} \ubf_i \ubf_i^T$ projects onto
the range of $\Abf$ and therefore also projects onto the range
of $\Bbf$.  It follows that $\Pbf \Bbf = \Bbf$ and by symmetry that $\Bbf \Pbf = \Bbf$.
This allows us to write
\begin{eqnarray*}
& & (\Bbf+\epsilon \Ibf)(\Abf+\epsilon \Ibf)^{-1}(\Bbf+\epsilon \Ibf) \\
& = & 
\Bbf \Pbf (\Abf+\epsilon \Ibf)^{-1} \Pbf \Bbf 
 + \epsilon \Bbf \Pbf (\Abf+\epsilon \Ibf)^{-1}  + \epsilon (\Abf+\epsilon \Ibf)^{-1} \Pbf \Bbf+ \epsilon^2 (\Abf+\epsilon \Ibf)^{-1} .
\end{eqnarray*}
The last three of these terms vanish as $\epsilon \downarrow 0$; the first term
tends to the claimed limit. These assertions follow from the expressions
\begin{eqnarray*}
\Pbf (\Abf+\epsilon \Ibf)^{-1} \Pbf  & = & \Pbf (\Abf+\epsilon \Ibf)^{-1}
\amp = \amp (\Abf+\epsilon \Ibf)^{-1}\Pbf  \amp = \amp
\sum_{\lambda_i>0} \frac{1}{\lambda_i+\epsilon} \ubf_i\ubf_i^T
\end{eqnarray*}
and $\epsilon^2 (\Abf+\epsilon \Ibf)^{-1} =
\sum_i \frac{\epsilon^2}{\lambda_i+\epsilon}\ubf_i\ubf_i^T$.
\end{proof}

\begin{lemma}
\label{lem:same_range}
If $\Rbf$ and $\Sbf$ are positive definite matrices, and the conformable
matrix $\Zbf$ has full column rank, then the matrices
$\Zbf \Rbf \Zbf^T$ and $\Zbf \Sbf \Zbf^T$ share a
common range.
\end{lemma}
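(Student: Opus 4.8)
The plan is to show that both $\Zbf \Rbf \Zbf^T$ and $\Zbf \Sbf \Zbf^T$ have exactly the same range as $\Zbf$ itself; once that is established, the claim that they share a common range is automatic. Because the argument treats $\Rbf$ and $\Sbf$ symmetrically, it suffices to prove $\text{range}(\Zbf \Rbf \Zbf^T) = \text{range}(\Zbf)$ for an arbitrary positive definite $\Rbf$, and then invoke the identical conclusion for $\Sbf$.

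First I would record the trivial inclusion $\text{range}(\Zbf \Rbf \Zbf^T) \subseteq \text{range}(\Zbf)$, valid because every column of $\Zbf \Rbf \Zbf^T$ is $\Zbf$ applied to some vector. Next I would pin down dimensions. Writing $\Rbf = \Rbf^{1/2} \Rbf^{1/2}$ with $\Rbf^{1/2}$ the nonsingular symmetric square root, I factor $\Zbf \Rbf \Zbf^T = (\Zbf \Rbf^{1/2})(\Zbf \Rbf^{1/2})^T$ and invoke the standard rank identity $\rank(\Abf \Abf^T) = \rank(\Abf)$. Since $\Rbf^{1/2}$ is invertible and $\Zbf$ has full column rank, the product $\Zbf \Rbf^{1/2}$ again has full column rank, so $\rank(\Zbf \Rbf \Zbf^T) = \rank(\Zbf \Rbf^{1/2}) = \rank(\Zbf)$. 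A subspace contained in $\text{range}(\Zbf)$ and of the same dimension must equal it, which upgrades the inclusion to the identity $\text{range}(\Zbf \Rbf \Zbf^T) = \text{range}(\Zbf)$.

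Applying the same reasoning to $\Sbf$ yields $\text{range}(\Zbf \Sbf \Zbf^T) = \text{range}(\Zbf)$, so the two ranges coincide and the lemma follows. There is no genuine obstacle here; the only point deserving a word of justification is the rank identity $\rank(\Abf \Abf^T) = \rank(\Abf)$, which follows from $\text{null}(\Abf \Abf^T) = \text{null}(\Abf^T)$ — if $\Abf \Abf^T \xbf = \zerobf$ then $\|\Abf^T \xbf\|^2 = \xbf^T \Abf \Abf^T \xbf = 0$, forcing $\Abf^T \xbf = \zerobf$ — combined with the rank–nullity theorem.
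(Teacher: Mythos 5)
Your proof is correct and takes essentially the same route as the paper: both arguments factor $\Zbf \Rbf \Zbf^T = (\Zbf \Rbf^{1/2})(\Zbf \Rbf^{1/2})^T$ via the symmetric square root and conclude that each of $\Zbf \Rbf \Zbf^T$ and $\Zbf \Sbf \Zbf^T$ has range equal to $\text{range}(\Zbf)$. Your inclusion-plus-dimension bookkeeping merely spells out the standard fact $\text{range}(\Abf \Abf^T) = \text{range}(\Abf)$ that the paper invokes directly, so no substantive difference exists.
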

\begin{proof} 
In fact, both matrices have range equal to the range of $\Zbf$. 
The matrices $\Zbf$ and $\Zbf \Rbf^{1/2}$ clearly have the same
range. Furthermore, the matrices $\Zbf \Rbf^{1/2}$ 
and $\Zbf \Rbf^{1/2} \Rbf^{1/2}\Zbf^T$ also have the same range.
\end{proof}

The convexity of the map $(\Xbf, \Ybf) \mapsto \Xbf^T \Ybf^{-1} \Xbf$ and Lemmas 
\ref{lem:MPinv-product}, \ref{lem:MPinv-limit}, and \ref{lem:same_range}
now yield via the obvious limiting argument the majorization
 \begin{eqnarray*}
 \Omegabf^{(t)} \Omegabf^{-1} \Omegabf^{(t)} 
 & \!\! = \!\!& (\Zbf_i \Rbf_i(\thetabf^{(t)})\Zbf_i^{T}\! + \!\sigma^{2(t)}\Ibf_{n_i}) (\Zbf_i \Rbf_i(\thetabf) \Zbf_i^{T}\! + \!\sigma^{2}\Ibf_{n_i})^{-1} (\Zbf_i \Rbf_i(\thetabf^{(t)}) \Zbf_i^{T} \! + \! \sigma^{2(t)}\Ibf_{n_i}) \\
 &\!\! \preceq \!\! & (\Zbf_i \Rbf_i(\thetabf^{(t)}) \Zbf_i^{T}) (\Zbf_i \Rbf_i(\thetabf) \Zbf_i^{T})^{+} (\Zbf_i\Rbf_i(\thetabf^{(t)}) \Zbf_i^{T}) + \frac{\sigma^{4(t)}}{\sigma^2} \Ibf_{n_i} \\
 & \!\! = \!\! &[\Zbf_i \Rbf_i(\thetabf^{(t)}) \Zbf_i^{T} \Zbf_i^{T+}] \Rbf_i^{-1}(\thetabf) [\Zbf_i^{+} \Zbf_i\Rbf_i(\thetabf^{(t)}) \Zbf_i^{T}]+ \frac{\sigma^{4(t)}}{\sigma^2}\Ibf_{n_i}
 \end{eqnarray*}
In combination with the minorization \eqref{eqn:logdet-majorization}, this gives the
surrogate
\begin{eqnarray*}
	g_i(\thetabf, \sigma^2 \mid \thetabf^{(t)}, \sigma^{2(t)})
	&=& - \frac 12 \tr (\Zbf_i^T \Omegabf_i^{-(t)} \Zbf_i \Rbf_i(\thetabf)) - \frac 12 \rbf_i^{(t)T} \Rbf_i^{-1}(\thetabf) \rbf_i^{(t)} \\
	& &  - \frac{\sigma^2}{2} \tr (\Omegabf_i^{-(t)}) - \frac{\sigma^{4(t)}}{2\sigma^2} (\ybf_i - \Xbf_i \betabf^{(t)})^T \Omegabf_i^{-2(t)} (\ybf_i - \Xbf_i \betabf^{(t)}) + c^{(t)},
\end{eqnarray*}
for the log-likelihood $L_i(\thetabf, \sigma^2)$,
where 
\begin{eqnarray*}
	\rbf_i^{(t)} & = & (\Zbf_i^{+} \Zbf_i\Rbf_i(\thetabf^{(t)}) \Zbf_i^{T}) \Omegabf_i^{-(t)} (\ybf_i - \Xbf_i \betabf^{(t)}) \amp = \amp \Rbf_i(\thetabf^{(t)}) \Zbf_i^{T} \Omegabf_i^{-(t)} (\ybf_i - \Xbf_i \betabf^{(t)}).
\end{eqnarray*}
The parameters $\thetabf$ and $\sigma^2$ are nicely separated. To maximize the overall minorization function $\sum_i g_i(\thetabf, \sigma^2 \mid \thetabf^{(t)}, \sigma^{2(t)})$, we update $\sigma^2$ via
\begin{eqnarray*}
	\sigma^{2(t+1)} & = & \sigma^{2(t)} \sqrt{\frac{\sum_i (\ybf_i - \Xbf_i \betabf^{(t)})^T \Omegabf_i^{-2(t)} (\ybf_i - \Xbf_i \betabf^{(t)})}{\sum_i \tr (\Omegabf_i^{-(t)})}}.
\end{eqnarray*}
For structured models such as autocorrelation and compound symmetry, updating $\thetabf$ is a 
low-dimensional optimization problem that can be approached through the stationarity condition
\begin{eqnarray*}
	\sum_i \vect \left(\Zbf_i^T \Omegabf_i^{(t)} \Zbf_i - \Rbf_i^{-1}(\thetabf) \rbf_{i}^{(t)}\rbf_{i}^{(t)T}  \Rbf_i^{-1}(\thetabf) \right)^T \frac{\partial }{\partial \theta_j} \vect \,\Rbf_i(\thetabf) & = & 0
\end{eqnarray*}
for each component $\theta_j$. For the unstructured model with $\Rbf_i(\thetabf) = \Rbf$ for all $i$, 
the stationarity condition reads 
\begin{eqnarray*}
	\sum_i \Zbf_i^T \Omegabf_i^{(t)} \Zbf_i & = & \Rbf^{-1} \left( \sum_i \rbf_{i}^{(t)}\rbf_{i}^{(t)T} \right) \Rbf^{-1}
\end{eqnarray*}
and admits an explicit solution based on Lemma \ref{lemma:ricatti-equation}.

Similar tactics apply to a multilevel LMM \citep{BatesPinheiro98Multilevel} with responses
\begin{eqnarray*}
	\Ybf_i & = & \Xbf_i \betabf + \Zbf_{i1} \gammabf_{i1} + \cdots \Zbf_{im} \gammabf_{im} + \epsilonbf_i.
\end{eqnarray*}
Minorization separates parameters for each level (variance component). Depending on the
complexity of the covariance matrices, maximization of the surrogate can be accomplished
analytically. For the sake of brevity, details are omitted.

\subsection{MAP estimation}

Suppose $\betabf$ follows an improper flat prior, the variance components $\sigma_i^2$ follow  inverse gamma priors with shapes $\alpha_i>0$ and scales $\gamma_i > 0$, and these
priors are independent. The log-posterior density then reduces to
\begin{eqnarray}
&& \ln f(\betabf, \sigmabf^2 | \ybf, \Xbf)  \nonumber \\
&=& - \frac 12 \ln \det \Omegabf - \frac{1}{2} (\ybf - \Xbf \betabf)^T \Omegabf^{-1} (\ybf - \Xbf \betabf) - \sum_{i = 1}^m (\alpha_i + 1)\ln \sigma_i^2 - \sum_{i=1}^m \frac{\gamma_i}{\sigma_i^2} + c,  \label{eqn:vc-posterior}
\end{eqnarray}
where $c$ is an irrelevant constant. The MAP estimator of 
$(\betabf, \sigmabf^2)$ is the mode of the posterior distribution. The update 
(\ref{generalized_least_squares}) of $\betabf$ given $\sigmabf^2$ remains the same. To update $\sigmabf^2$ given $\betabf$, apply the same minorizations \eqref{eqn:matrix-ineq} and \eqref{eqn:logdet-majorization} to the first first two terms of equation \eqref{eqn:vc-posterior}.
This separates parameters and yields a convex surrogate for each $\sigma_i^2$. 
The minimum of the $\sigma_i^2$ surrogate is defined by the stationarity condition
\begin{eqnarray*}
0 & = & -\frac{1}{2} \tr(\Omegabf^{-(t)}\Vbf_i)+\frac{\sigma_i^{4(t)}}{2\sigma_i^{4}}
(\ybf - \Xbf \betabf^{(t)})^T \Omegabf^{-(t)} \Vbf_i \Omegabf^{-(t)} (\ybf - \Xbf \betabf^{(t)})
-\frac{\alpha_i+1}{\sigma_i^2}+\frac{\gamma_i}{\sigma_i^4}.
\end{eqnarray*}
Multiplying this by $\sigma_i^4$ gives a quadratic equation in $\sigma_i^2$.
The positive root should be taken to meet the nonnegativity constraint on $\sigma_i^2$.

For the multivariate response model \eqref{eqn:loglik-mvt}, we assume the variance components $\Gammabf_i$ follow independent inverse Wishart distributions with degrees of freedom $\nu_i > d-1$ and scale matrix $\Psibf_i \succ \zerobf$. The log density of the posterior distribution is
\begin{eqnarray}
L(\Bbf, \Gammabf|\Xbf, \Ybf) & =&  - \frac 12 \ln \det \Omegabf - \frac{1}{2} \vect (\Ybf - \Xbf \Bbf)^T \Omegabf^{-1} \vect (\Ybf - \Xbf \Bbf) \nonumber \\
&  & - \frac 12 \sum_{i=1}^m (\nu_i+d+1) \ln \det \Gammabf_i - \frac 12 \sum_{i=1}^m \tr(\Psibf_i\Gammabf_i^{-1}) + c, \label{eqn:pos-loglik-mvt}
\end{eqnarray}
where $c$ is an irrelevant constant. Invoking the minorizations \eqref{eqn:logdet-majorization} and \eqref{matrix_normal_majorization} for the first two terms and the supporting hyperplane minorization
\begin{eqnarray*}
-\ln\det \Gammabf_i &\geq& -\ln\det \Gammabf_i^{(t)} - \tr\{\Gammabf_i^{-(t)}(\Gammabf_i - \Gammabf_i^{(t)})\} 
\end{eqnarray*}
for  $-\ln \det \Gammabf_i$ gives the surrogate function
\begin{eqnarray*}
	g(\Gammabf|\Gammabf^{(t)})
	&=& - \frac 12 \sum_{i=1}^m \tr (\Omegabf^{-(t)} (\Gammabf_i \otimes \Vbf_i)) - \frac 12 \sum_{i=1}^m \tr \left( \Gammabf_i^{(t)} \Rbf^{(t)T} \Vbf_i \Rbf^{(t)} \Gammabf_i^{(t)} \Gammabf_i^{-1} \right)  \\
    & & - \frac 12 \sum_{i=1}^m (\nu_i+d+1) \tr(\Gammabf_i^{-(t)}\Gammabf_i) - \frac 12 \sum_{i=1}^m \tr(\Psibf_i\Gammabf_i^{-1})+ c^{(t)}.
\end{eqnarray*}
The optimal $\Gammabf_i$ satisfies the stationarity condition
\begin{eqnarray*}
&  & (\Ibf_d \otimes \onebf_n)^T [(\onebf_d \onebf_d^T \otimes \Vbf_i) \odot \Omegabf^{-(t)}]  (\Ibf_d \otimes \onebf_n) + (\nu_i+d+1)\Gammabf_i^{-(t)} \\ 
& = & \Gammabf_i^{-1} (\Gammabf_i^{(t)} \Rbf^{(t)T} \Vbf_i \Rbf^{(t)} \Gammabf_i^{(t)}+\Psibf_i) \Gammabf_i^{-1} \label{eqn:mvt-stationarity-bayes}
\end{eqnarray*}
and can be found using Lemma \ref{lemma:ricatti-equation}.

\subsection{Variable selection}

In the statistical analysis of high-dimensional data, the imposition of sparsity leads to better interpretation and more stable parameter estimation. MM algorithms mesh well with penalized estimation.  The simple variance components model \eqref{eqn:vc-loglike} illustrates this fact.
For the selection of fixed effects, minimizing the lasso-penalized log-likelihood
\begin{eqnarray*}
	- L(\betabf, \sigmabf^2) + \lambda \sum_j |\beta_j| 
\end{eqnarray*}
is often recommended \citep{SchelldorferBuhlmannVandeGeer11L1LMM}. The only change to the MM Algorithm \ref{algo:varcomp-univariate-y} is that in estimating $\betabf$, one solves a lasso penalized general least squares problem rather than an ordinary general least squares problem. 
The updates of the variance components $\sigma_i^2$ remain the same. For selection among a large number of variance components, one can minimize the ridge-penalized log-likelihood
\begin{eqnarray*}
	- L(\betabf, \sigmabf^2) + \lambda \sum_{i=1}^m \sigma_i^2 \label{eqn:vc-ridge}
\end{eqnarray*}
subject to the nonnegativity constraints $\sigma_i^2 \ge 0$. Here the standard deviations 
$\sigma_i$ are the underlying parameters. The variance update \eqref{eqn:MM-update-univariate} becomes
\begin{eqnarray*}
	\sigma_i^{2(t+1)} & = & \sigma_i^{2(t)} \sqrt{\frac{(\ybf - \Xbf \betabf^{(t)})^T \Omegabf^{-(t)} \Vbf_i \Omegabf^{-(t)} (\ybf - \Xbf \betabf^{(t)})}{\tr (\Omegabf^{-(t)}  \Vbf_i) + 2 \lambda}}, \quad i = 1,\ldots, m . \label{eqn:lasso-mm-update}
\end{eqnarray*}
The updates for the fixed effects $\betabf$ are unaffected. Equation \eqref{eqn:lasso-mm-update}
clearly exhibits shrinkage but no thresholding. The lasso penalized log-likelihood
\begin{eqnarray}
	- L(\betabf, \sigmabf^2) + \lambda \sum_{i=1}^m \sigma_i \label{eqn:vc-lasso}
\end{eqnarray}
subject to nonnegativity constraint $\sigma_i \ge 0$ achieves both ends. The update of $\sigma_i$ is chosen among the positive roots of a quartic equation and the boundary 0, whichever yields a lower objective value. 

\subsection{Beyond the linear model}

One can extend the MM algorithms to binary and discrete response data with the framework of generalized estimating equations (GEE) \citep{Liang1986}. Again consider $n$ independent data clusters $(\ybf_i, \Xbf_i)$. In longitudinal studies, $(\ybf_i, \Xbf_i)$ would be the responses and clinical covariates of subject $i$ at different time points. In genetic studies, $(\ybf_i, \Xbf_i)$ would be the trait values and covariates of individuals within family $i$. GEE captures the within-cluster correlation by specifying the first two moments of the conditional distribution of $\ybf_i$ given $\Xbf_i$, namely $\mu_{ij} = E(y_{ij} | \xbf_{ij})$ and $\sigma^2_{ij}=\text{Var}(y_{ij} | \xbf_{ij})$. If one assumes that $y_{ij}$ follows an exponential family with canonical link, then 
\begin{eqnarray*}
\mu_{ij}(\betabf) & = & \mu(\theta_{ij}) \;\; \textrm{ and } \;\;
\sigma^2_{ij}(\betabf) \amp = \amp \phi \mu'(\theta_{ij}), \quad i=1, \ldots, n, \; j=1, \ldots, n_i, 
\end{eqnarray*}
where $\mu(t)$  is a differentiable canonical link function, $\mu'(t)$ is its first derivative, $\theta_{ij} = \xbf_{ij}^T \betabf$  is the linear systematic part of $y_{ij}$ associated with the covariates, $\phi$ is an over-dispersion parameter, and $\betabf$ is the vector of fixed effects.

The GEE estimator of $\betabf$ solves the equation
\begin{eqnarray*}
\sum_{i=1}^n d\mubf_i(\betabf)^T \Vbf_i^{-1}(\alphabf,\betabf) [ \ybf_i - \mubf_i(\betabf) ] 
& = & \zerobf_p,
\end{eqnarray*}
where $\ybf_i = (y_{i1}, \dots, y_{in_i})^T$, $\mubf_i(\betabf) = [\mu_{i1}(\betabf), \ldots, \mu_{in_i}(\betabf)]^T$, $\Vbf_i=\text{cov}(\ybf_i)$ is the working covariance matrix of the $i$-th subject, and $d\mubf_i(\betabf)$ is the differential of $\mubf_i(\betabf)$.  In longitudinal studies, $\Vbf_i$ is often parameterized as $\Vbf_i (\alphabf, \betabf) = \Abf_i^{1/2}(\betabf) \Rbf_i(\alphabf) \Abf_i^{1/2}(\betabf)$, where $\Abf^{1/2}(\betabf)$ is a diagonal matrix with standard deviations $\sigma_{ij}$ along its diagonal, and $\Rbf(\alphabf)$ is a correlation matrix with parameters $\alphabf$. This parameterization is too restrictive in many other applications. For instance, in genetic studies, it is critical to dissect the variance into different sources such as additive, dominance, and household environment \citep{Lange02GeneticsBook}. This suggests the variance component parameterization
\begin{eqnarray*}
	\Rbf_i(\sigmabf^2) & = & \sigma_a^2 2\Phibf_i + \sigma_d^2 \Deltabf_{7,i} + \sigma_{h}^2 \Hbf_i + \sigma_e^2 \Ibf_{n_i}, \quad \sigma_a^2 + \sigma_d^2 + \sigma_h^2 + \sigma_e^2 \amp = \amp 1,
\end{eqnarray*}
where in the $i$-th family $\Phibf_i$ is the theoretical kinship matrix, $\Deltabf_{7,i}$ is the dominance variance matrix, and $\Hbf_i$ is the household indicator matrix. The matrices
$\Phibf_i$, $\Deltabf_{7,i}$, and $\Hbf_i$ are correlation matrices, and the simplex constraint ensures $\Rbf_i$ is as well. In general, the variance component parameterization $\Rbf_i (\sigmabf^2) = \sum_{j=1}^m \sigma_j^2 \Rbf_{ij}$ with the simplex constraint in force is reasonable. In this setting the GEE update of $\betabf$ given $\sigmabf^2$ solves the equation 
\begin{eqnarray*}
\sum_{i=1}^n \Xbf_i^T \Abf_i^{-1/2}(\betabf) \Rbf_i^{-1}(\sigmabf^2) \Abf^{-1/2}(\betabf) [ \ybf_i - \mubf_i(\betabf) ] & = & \zerobf_p.
\end{eqnarray*}
This is just the classical GEE update. The difficulty lies in updating $\sigmabf^2$ given $\betabf$. We propose minimizing the sum $\sum_{i=1}^n \psi(\Rbf_i)$, where $\psi(t)$ is a scalar convex loss function. Example loss functions include the Mahalanobis criterion
\begin{eqnarray*}
	\sum_{i=1}^n \psi_1(\Rbf_i) &=& \sum_{i=1}^n (\ybf_i - \widehat \mubf_i)^T \widehat \Abf_i^{-1/2} \Rbf_i^{-1}(\sigmabf^2) \widehat \Abf_i^{-1/2} (\ybf_i - \widehat \mubf_i)
\end{eqnarray*}
and the sum of squared Frobenius distances
\begin{eqnarray*}
	\sum_{i=1}^n \psi_2(\Rbf_i) &=& \sum_{i=1}^n \tr \left[ (\ybf_i - \widehat \mubf_i) (\ybf_i - \widehat \mubf_i)^T - \widehat \Abf_i^{1/2} \Rbf_i(\sigmabf^2) \widehat \Abf_i^{1/2} \right]^2.
\end{eqnarray*}
The convexity of $\psi_1(t)$ entails a minorization similar to the minorization 
\eqref{eqn:matrix-ineq}.  Minimizing the surrogate then yields the MM update
\begin{eqnarray*}
	\sigma_j^{2(t+1)} = \frac{\sum_{i=1}^n (\ybf_i - \widehat \mubf_i)^T \widehat \Abf_i^{-1/2} \Rbf_i^{-1}(\sigmabf^{2(t)}) \Rbf_{ij} \Rbf_i^{-1}(\sigmabf^{2(t)}) \widehat \Abf_i^{-1/2} (\ybf_i - \widehat \mubf_i)}{\sum_{j=1}^m \sum_{i=1}^n (\ybf_i - \widehat \mubf_i)^T \widehat \Abf_i^{-1/2} \Rbf_i^{-1}(\sigmabf^{2(t)}) \Rbf_{ij} \Rbf_i^{-1}(\sigmabf^{2(t)}) \widehat \Abf_i^{-1/2} (\ybf_i - \widehat \mubf_i)}.
\end{eqnarray*}
Under $\psi_2(t)$ the MM update boils down to projection onto the simplex. Further exploration of these ideas probably deserves another paper and will be omitted here for the sake of brevity.

\section{A numerical example}
\label{sec:numerical-example}

Quantitative trait loci (QTL) mapping aims to identify genes associated with a quantitative trait. Current sequencing technology measures millions of genetic markers in study subjects. Traditional single-marker tests suffer from low power due to the low frequency of many markers and the 
corrections needed for multiple hypothesis testing. Region-based association tests are a powerful alternative for analyzing next generation sequencing data with abundant rare variants.

Suppose $\ybf$ is a $n \times 1$ vector of quantitative trait measurements on $n$ people, $\Xbf$ is an $n \times p$ predictor matrix (incorporating predictors such as sex, smoking history, and principal components for ethnic admixture), and $\Gbf$ is an $n \times m$ genotype matrix of $m$ genetic variants in a pre-defined region. The linear mixed model assumes 
\begin{eqnarray*}
	\Ybf &=& \Xbf \betabf + \Gbf \gammabf + \epsilonbf, \quad \gammabf \sim N(\zerobf,\sigma_g^2\Ibf), \quad \epsilonbf \sim N(\zerobf,\sigma_e^2\Ibf_n),
\end{eqnarray*}
where $\betabf$ are fixed effects, $\gammabf$ are random genetic effects, and $\sigma_g^2$ and $\sigma_e^2$ are variance components for the genetic and environmental effects, respectively. Thus, the phenotype vector $\Ybf$ has covariance $\sigma_g^2 \Gbf \Gbf^T + \sigma_e^2\Ibf_n$, where $\Gbf \Gbf^T$ is the kernel matrix capturing the overall effect of the $m$ variants. Current approaches test the null hypothesis $\sigma_g^2=0$ for each region separately and then adjust for multiple testing \citep{Lee14RVSurvey}. Hence, we consider the joint model
\begin{eqnarray*}
	\ybf &=& \Xbf \betabf + s_1^{-1/2} \Gbf_1 \gammabf_1 + \cdots + s_m^{-1/2}\Gbf_m \gammabf_m + \epsilonbf, \label{vc-model} \\
	\gammabf_i &\sim& N(\zerobf,\sigma_{i}^2 \Ibf), \quad \epsilonbf \sim N(\zerobf,\sigma_e^2\Ibf_n)
\end{eqnarray*}
and select the variance components $\sigma_i^2$ via the penalization \eqref{eqn:vc-lasso}. Here $s_i$ is the number of variants in region $i$, and the weights $s_i^{-1/2}$ put all variance components on the same scale.

We illustrate this approach using the COPDGene exome sequencing study (\url{http://www.copdgene.org/}) \citep{Regan10COPD}. After quality control, $399$ individuals and 646,125 genetic variants remain for analysis. Genetic variants are grouped into 16,619 genes to expose those genes associated with the complex trait {\tt height}. We include {\tt age}, {\tt sex}, and the top 3 principal components in the mean effects. Because the number of genes vastly exceeds the sample size $n=399$,  we first pare the 16,619 genes down to 200 genes according to their marginal likelihood ratio test p-values and then carry out penalized estimation of the 200 variance components in the joint model \eqref{eqn:vc-lasso}. This is similar to the sure independence screening strategy for selecting mean effects \citep{FanLv2008}. Genes are ranked according to the order they appear in the lasso solution path. Table \ref{table:lasso-rank} lists the top 10 genes together with their marginal LRT p-values. Figure \ref{fig:copd-solpath} displays the corresponding segment of the lasso solution path. It is noteworthy that the ranking of genes by penalized estimation differs from the ranking according to marginal p-values. The same phenomenon occurs in selection of highly correlated mean predictors. This penalization approach for selecting variance components warrants further theoretical study. It is reassuring that the simple MM algorithm scales to high-dimensional problems.

\begin{table}
\begin{center}
{\small
\begin{tabular}{crrr}
\toprule
Lasso Rank & Gene & Marginal P-value & \# Variants \\
\midrule
1 & DOLPP1 & $2.35 \times 10^{-6}$ & 2 \\ 
2 & C9orf21 & $3.70 \times 10^{-5}$ & 4 \\
3 & PLS1 & $ 2.29\times 10^{-3}$ & 5 \\
4 & ATP5D & $6.80 \times 10^{-7}$ & 3  \\
5 & ADCY4 & $1.01 \times 10^{-3}$ & 11 \\
6 & SLC22A25 & $3.95 \times 10^{-3}$ & 14 \\
7 & RCSD1 & $9.04 \times 10^{-4}$ & 4 \\
8 & PCDH7 & $1.20 \times 10^{-4}$ & 7 \\
9 & AVIL& $8.34 \times 10^{-4}$ & 11\\
10 & AHR & $1.14 \times 10^{-3}$ & 7 \\
\bottomrule
\end{tabular}
}
\caption{Top 10 genes selected by the lasso penalized variance component model \eqref{eqn:vc-lasso} in an association study of 200 genes and the complex trait {\tt height}.}
\end{center}
\label{table:lasso-rank}
\end{table}

\begin{figure}
\begin{center}
	\includegraphics[width=4.5in]{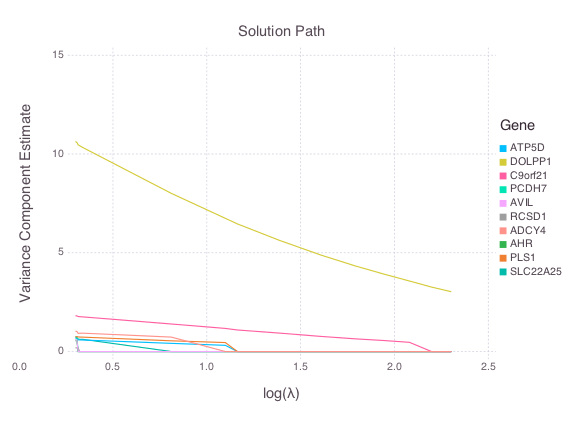}
\end{center}
\caption{Solution path of the lasso penalized variance component model \eqref{eqn:vc-lasso} in an association study of 200 genes and the complex trait {\tt height}.}
\label{fig:copd-solpath}
\end{figure}

%

\section{Discussion}

The current paper leverages the MM principle to design powerful and versatile algorithms
for variance components estimation. The MM algorithms derived are notable for their simplicity, generality, numerical efficiency, and theoretical guarantees. Both ordinary MLE and REML are apt to benefit. Other extensions are possible. In nonlinear models \citep{BatesWatts88NonlinearBook,LindstromBates90NonlinearLMM}, the mean response is a nonlinear function in the fixed effects $\betabf$. One can easily modify the MM algorithms to update $\betabf$ by a few rounds of Gauss-Newton iteration. The variance components updates remain unchanged.

One can also extend our MM algorithms to elliptically symmetric densities
\begin{eqnarray*}
f(\ybf) & = & \frac{e^{-\frac{1}{2}\kappa(\delta^2)}}
{(2 \pi)^{\frac{n}{2}} (\det\Omegabf)^{\frac{1}{2}}}  \label{elsymden}
\end{eqnarray*}
defined for $\ybf \in \mathbb{R}^n$, where 
$\delta^2 = (\ybf - \mubf)^T \Omegabf^{-1}(\ybf - \mubf)$ 
denotes the Mahalanobis distance between $\ybf$ and $\mubf$.
Here we assume that the function $\kappa(s)$ is strictly increasing and strictly 
concave. Examples of elliptically symmetric densities include the multivariate $t$, slash, 
contaminated normal, power exponential, and stable families.
Previous work \citep{HuberRonchetti09RobustStatistics,Lange1993normal} has focused on using the MM principle
to convert parameter estimation for these robust families into parameter estimation 
under the multivariate normal.  One can chain the relevant majorization 
$\kappa(s) \le \kappa(s^{(t)})+\kappa'(s^{(t)})(s-s^{(t)})$ with our
previous minorizations and simultaneously split variance components and
pass to the more benign setting of the multivariate normal.

\section*{Acknowledgments}
The research is partially supported by NSF grant DMS-1055210 and NIH grants R01 HG006139, R01 GM53275, and R01 GM105785. The authors thank Michael Cho, Dandi Qiao, and Edwin Silverman for their assistance in processing and assessing COPDGene exome sequencing data. COPDGene is supported by NIH R01 HL089897 and R01 HL089856.


\bibliography{vc_notes}
\bibliographystyle{apalike}

\end{document}